\def\ifundefined{\@ifundefined}
\newcommand{\subparagraph}{}
\newcommand{\bitm}{\begin{itemize}}
\newcommand{\eitm}{\end{itemize}}
\newcommand{\be}{\begin{equation}}
\newcommand{\ee}{\end{equation}}
\newcommand{\bea}{\begin{eqnarray}}
\newcommand{\eea}{\end{eqnarray}}
\newcommand\ba[1]{\left[ \begin{array}{#1}}
\def\ea{\end{array}\right]}
\def\nn{\nonumber\\}
\newcommand{\bfi}{\begin{figure}}
\newcommand{\efi}{\end{figure}}
\newcommand{\Kendall}{Kendall tau~}
\newcommand{\fn}{}
\newtheorem{thm}{Theorem}  
\newtheorem{lem}{Lemma}        
\newtheorem{pro}{Proposition}
\newtheorem{eg}{Example}
\newtheorem{defn}{Definition}       
\DeclareMathOperator{\length}{length}
\DeclareMathOperator{\Cen}{Cen}
\DeclareMathOperator{\dist}{dist} 
\begin{document}

\title{Linear Programming Upper Bounds on Permutation Code Sizes From Coherent Configurations Related to the Kendall Tau Distance Metric}

\author{\authorblockN{Fabian Lim$^1$ and Manabu Hagiwara$^2$}\\
\authorblockA{$^1$Research Laboratory of Electronics, Massachusetts Institute of Technology, 
Cambridge, MA 02139\\ 
$^2$National Institute of
Advanced Industrial Science and Technology\\
Central 2, 1-1-1 Umezono, Tsukuba City,
Ibaraki, 305-8568, JAPAN\\
Email: flim@mit.edu, hagiwara.hagiwara@aist.go.jp \vspace*{-10pt}
\thanks{F. Lim recieved support from NSF Grant ECCS-1128226.}
}}

\maketitle
\begin{abstract} 
Recent interest on permutation rank modulation shows the \Kendall metric as an important distance metric.
This note documents our first efforts to obtain upper bounds on optimal code sizes (for said metric) ala Delsarte's approach. For the Hamming metric, Delsarte's seminal work on powerful linear programming (LP) bounds have been extended to permutation codes, via association scheme theory. For the \Kendall metric, the same extension needs the more general theory of coherent configurations, whereby the optimal code size problem can be formulated as an extremely huge semidefinite programming (SDP) problem. 
Inspired by recent algebraic techniques for solving SDP's, we consider the dual problem, and propose an LP to search over a subset of dual feasible solutions.
We obtain modest improvement over a recent Singleton bound due to Barg and Mazumdar. 
We regard this work as a starting point, towards fully exploiting the power of Delsarte's method,
which are known to give some of the best bounds in the context of binary codes.
\end{abstract}

\begin{IEEEkeywords}
association schemes, coherent configurations, permutations, linear programming, semidefinite programming
\end{IEEEkeywords}

\thispagestyle{empty}
\pagestyle{empty} 

\section{Introduction}

A permutation code is designed to only allow certain pairwise distances between any two codewords. These codes have been studied in various contexts, \emph{e.g.}, group codes~\cite{Slepian1968}, signal modulation~\cite{Slepian1965,Mittelholzer1996}, vector quantization~\cite{Goyal2001}, rank modulation~\cite{Jiang2009,Barg2009}, cost-constrained transpositions~\cite{Farnoud2010}, etc. This work is motivated by a recent study on a fundamental coding problem. In~\cite{Barg2009} they looked at optimal code sizes with respect to the \Kendall distance metric. This metric is important to rank modulation and its applications, \emph{e.g.}, flash memories. 

For binary codes, Delsarte's optimization-based methods~\cite{Delserte} give some of the best known bounds~\cite{Navon2007}.
For permutation codes, we observe during initial experiments (for very small lengths) that Delsarte-like methods outperform Hamming (sphere packing) bounds~\cite{Barg2009,Margolius2001}. 
Our interest is to investigate, if this improvement carries over for larger codes. 
Tarnanen extended Delsarte's work over to permutation codes~\cite{Tarnanen}, however only for the Hamming metric (and other metrics with similar symmetries).
The novelty here is to consider the \Kendall metric, and as pointed out in~\cite{Barg2009}, lacks required symmetry to straight-forwardly apply Tarnanen's techniques.

Delsarte's (and Tarnanen) techniques are based on association schemes, from which linear programming (LP) formulations (of the optimal code size problem) are obtained. 
For the \Kendall metric, one needs to consider the more general theory of coherent configurations (CC), which instead deliver semidefinite programming (SDP) formulations. 
The matrices in these SDP's turn out to be of unwieldy size, but recent work~\cite{ISDP,Schrijver2005,SDP_Dual} suggest possible approaches.
One may exploit the algebraic structure of the CC's, to only work with block-diagonalized (and possibly much smaller) versions of these matrices.  
To our knowledge, such recent techniques are new in the area of permutation codes.
However, the solution is not straight-forward. 
As code lengths increase, the CC's (related to the \Kendall metric) become huge quickly, motivating the techniques presented in this preliminary report.

%


While we believe to be presently unable to fully exploit the power of SDP bounds, we show some initial success. We consider the dual problem (also a SDP), and use an LP to search over a subset of feasible solutions.
We obtained modest improvement over a recently published Singleton bound in~\cite{Barg2009}.
The reduced complexity allows us to compute up to permutation codes of length $11$ (where the matrices were previously of order $11$ factorial).
Certain bottlenecks, if solved, could allow computation for longer codes.
As it stands, our proposed LP bounds perform poorer than known Hamming bounds~\cite{Barg2009}, and it remains to see how far sophisticated SDP-based approaches can ultimately bring us. This note aims to motivate new research to resolve this open question.


%
%
%

\section{Background}

\subsection{Optimal Code Size Problem and Two Metrics}

\newcommand{\Sym}{\mathcal{S}}
\newcommand{\Om}{\Omega}
\newcommand{\dmin}{\delta_{\scriptsize \mbox{\lowercase{min}}}}
\newcommand{\g}{g}
\newcommand{\h}{h}
\newcommand{\Iv}{^{-1}}
\newcommand{\define}{\stackrel{\Delta}{=}}
\newcommand{\wo}{w_0}
\newcommand{\C}{\mathcal{C}}

\newcommand{\D}{\dmin}
\renewcommand{\S}{\mathcal{V}}
\newcommand{\nD}{(n,\D)}
Let $\Sym_n$ denote the \emph{symmetric group} on a set $\{1,2, \dots, n \}$
and $\dist( , )$ be a distance metric on $\Sym_n$. 
A subset $\S$ of $\Sym_n$ is an $\nD$ \textbf{permutation code} (with respect to dist(,)), if for any $g,h\in \S$ such that $g\neq h$,
we have $\dist(g,h) \geq \D$. 

\newcommand{\Asize}{\mu}

\begin{defn}[Optimal code size problem] \label{def:OptCodeSize}
Let $\dist(,)$ be a distance metric on the symmetric group $\Sym_n$. Let $\dmin\geq1$.
The following problem is the \textbf{optimal code size problem}.
\begin{align}
	 & \max_{\S \subseteq \Sym_n} \# \S \label{eqn:opt1} \\
	\mbox{s.t.} \dist(\g,\h) \geq \D &\mbox{ for all } \g,\h \in \S \mbox{ where } \g \neq \h, \nonumber
\end{align}
and $\# \S$ denotes the cardinality of the set $\S$.
Denote $\Asize(n,\D)$ to be the maximal cardinality achieved by $\nD$ codes, \emph{i.e.}, $\Asize(n,\D)$ equals the optimal value of the above problem.
\end{defn}

\newcommand{\ind}{i}
The \emph{image} of $\ind$ by $g$ is denoted $g(i)$. The \emph{inverse} of $g$ is denoted $g\Iv$. 
The \emph{product} of permutations $g$ and $h$ is denoted $gh$, whereby $(gh)(\ind) = g(h(\ind))$.
Most literature (\textit{e.g.}, Tarnanen~\cite{Tarnanen}) consider the \textbf{Hamming metric}
\bea
	\dist(\g,\h) &\define& \#\{ 1 \le x \le n : (\g\Iv \h)(\ind) \neq \ind\}, \label{eqn:Hamming}
\eea
\textit{i.e.}, the Hamming distance $\dist(\g,\h)$ equals the number of \emph{moved} points of $\g\Iv h$.
For the \emph{direct product} group $\Sym_n\times\Sym_n$, define its action 
on $\Sym_n$, as $(\alpha, \beta) \cdot g \define \alpha g \beta \Iv$, where $( \alpha, \beta) \in \Sym_n\times\Sym_n$ and $g \in \Sym_n$.
For any subgroup $\mathcal{G}$ of $\Sym_n\times\Sym_n$, 
a metric $\dist(,)$ on $\Sym_n$ is \textbf{$\mathcal{G}$-invariant} if for any $g,h\in \Sym_n$, we have 
$\dist(g,h) = \dist((\alpha,\beta)\cdot\g,(\alpha,\beta)\cdot\h) $ for all $(\alpha,\beta)\in \mathcal{G}$.
The Hamming metric (\ref{eqn:Hamming}) can be verified to be $(\Sym_n\times\Sym_n)$-invariant.

\newcommand{\Cyc}{\Psi_n}
\newcommand{\G}{\mathcal{G}}


%
The length of a permutation $g$, denoted $\length(g)$, equals the minimum integer $r$ satisfying $g = \alpha_1 \alpha_2 \dots \alpha_r$ whereby $\alpha_i$ are \emph{adjacent transpositions} in $\Sym_n$. For rank modulation~\cite{Jiang2009,Barg2009} we consider the \textbf{\Kendall metric}, given as
\bea
\dist(\g,\h) \define  \length(g\Iv h). \label{eqn:length}
\eea
There exists a unique element $\wo$, termed the \textbf{longest element}, that satisfies $\length(\wo)= n(n-1)/2$.
Then $\wo$ is an involution, \emph{i.e.}, $\wo\Iv=\wo$, and $\dist(\g,\h) = \dist(\g\wo,\h\wo)$, 
see~\cite{Humphreys}, p. 119.
Denote a subgroup $\{e,\wo\}$ of $\Sym_n$ by $\Cyc$, where $e$ is the identity element of $\Sym_n$.
In general, the \Kendall metric is $(\Sym_n\times \Cyc)$-invariant.

A permutation $g$ written as $g=(123)$ means $g(1)=2$, $g(2)=3$ and $g(3)=1$. Note $(12),(23),(13)$ are transpositions, in particular the first two are adjacent transpositions.

\newcommand{\size}[1]{\# #1}
\begin{eg} \label{eg:1}
Consider $\Sym_3$ with elements $e,(12),(23),(123),\linebreak[4](132),(13)$, and the Hamming metric. The minimum distance between any two non-equal permutations is $2$. 
For $\dmin=1$ and $2$ 
we have $\Asize(n,\D) = \size{\Sym_3}$. 
For $\dmin=3$ 
the code $\S$ with the optimal size satisfies $\S = \{e,(123),(132)\}$. 
Check 
$\dist(e,(123))= \dist(e,(132)) =3$, and $\dist((123),(132))=\dist(e,(123)\Iv(132))=\dist(e,(123)) = 3$.

The minimum possible non-zero \Kendall pairwise distance is $1$.
For $\dmin = 1$, 
we have $\Asize(n,\D) = \size{\Sym_3}$ as before. 
For $\dmin = 2$ the optimal code satisfies $\S = \{e,(123),(132)\}$. 
Check $\dist(e,(123)) = \length((123)) = 2$, where $(123)=(12)(23)$.
For $\dmin = 3$ 
the optimal code satisfies $\S = \{e,(13)\}$, where $(13)$ is the longest element $\wo$ in $\Sym_3$ and $\length((13))=3$ (here $(13)=(12)(23)(12)$).
\end{eg}


\newcommand{\p}{p}
\newcommand{\tup}[1]{(#1)}
\newcommand{\Orb}{\Delta}
\newcommand{\brak}[1]{\left\langle #1\right\rangle}
\newcommand{\tta}{\theta}
\newcommand{\1}{\mathbf{1}}
\newcommand{\tp}{*}
\renewcommand{\d}{d}

\newcommand{\OmSq}{{\Om^2}}

\subsection{Coherent Configurations (CC)}\label{section:CoherentConfigurationsAndAssociationSchemes}


 
\renewcommand{\OmSq}{\Sym_n\times\Sym_n}
We now describe objects used to formulate relaxations of (\ref{eqn:opt1}).
For a subgroup $\G$ of $\OmSq$, 
define an \textbf{induced action} of $\G$ on $\OmSq$, as $\g\cdot(x,y) \define \tup{g(x),g(y)}$ where
$\g \in \G$ and $x,y\in\Sym_n$. An orbit of this induced action is termed an \textbf{orbital}.
These orbitals $\Orb_1,\Orb_2,\cdots,\Orb_\d$ of the induced action
 partition $\{(x,y): x,y\in\Sym_n\}= \cup_{i=1}^\d \Orb_i$. 
 If the action of $\G$ on $\Sym_n$ is \emph{transitive}, we use the convention $\Orb_1=\{(x,x): x \in \Sym_n\}$.
 For each orbital $\Orb_i$, we correspond an \textbf{adjacency matrix} $A_i$ as follows. Here $A_i$ is a 0-1 matrix, whose rows/columns are indexed by $\Sym_n$, and we have $(A_i)_{x,y} = 1$ if and only if $\tup{x,y} \in \Orb_i$. Let $A_i^T$ denote the transposed matrix of $A_i$.
 
\renewcommand{\Om}{\Sym_n}
\newcommand{\tpr}{T}
\begin{thm}[c.f.~\cite{BI}, p. 52] \label{thm:CC}
Let $\G$ be a group which acts on $\Om$ transitively.
For the induced action of $\G$ on $\OmSq$, the adjacency matrices $A_i$
 corresponding to the $\d$ orbitals $\Orb_i$, satisfy 
\bitm
	\item[i)] $A_1 $ equals the identity matrix.
	\item[ii)] the sum $\sum_{i=1}^\d A_i $ equals the all ones matrix.
	\item[iii)] for any $A_i$, there exists some $A_{i'}$ that satisfies $A_i^\tpr = A_{i'}$.
	\item[iv)] for any $i, j \in \{1,2,\cdots,\d\}$, there exists numbers $\p_{ij}^k$ that satisfy $A_i A_j = \sum_{k=1}^\d \p_{ij}^k A_k$.
\eitm
\end{thm}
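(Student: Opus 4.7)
The plan is to treat the four items in roughly increasing order of difficulty, with each verification tracing back to the single underlying fact that $\G$ permutes $\Om\times\Om$ and the orbitals partition this set.

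For (i), I would first observe that the diagonal $\Orb_1=\{(x,x):x\in\Om\}$ is indeed a single orbital under the induced action: it is $\G$-invariant since $g\cdot(x,x)=(g(x),g(x))$, and it is a single orbit by the transitivity hypothesis on $\Om$. By the definition $(A_1)_{x,y}=1$ iff $x=y$, so $A_1$ is the identity matrix. For (ii), since $\Orb_1,\dots,\Orb_d$ partition $\Om\times\Om$, every pair $(x,y)$ lies in exactly one $\Orb_i$, so $\sum_i (A_i)_{x,y}=1$ for all $x,y$, which is exactly the all-ones matrix. For (iii), define $\Orb_i^T:=\{(y,x):(x,y)\in\Orb_i\}$; a direct check shows this set is $\G$-invariant (the induced action commutes with the coordinate swap) and is a single orbit. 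Hence it must equal some $\Orb_{i'}$, and then by construction $A_i^T=A_{i'}$.

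The real content is (iv). From the matrix product formula one has
\begin{equation}
(A_iA_j)_{x,y}=\#\{z\in\Om:(x,z)\in\Orb_i,\ (z,y)\in\Orb_j\}.
\end{equation}
The key step is to show this count depends only on the orbital $\Orb_k$ containing $(x,y)$. Given another pair $(x',y')\in\Orb_k$, there exists $g\in\G$ with $g\cdot(x,y)=(x',y')$. Then $z\mapsto g(z)$ is a bijection on $\Om$ that sends $\{z:(x,z)\in\Orb_i,(z,y)\in\Orb_j\}$ into $\{z':(x',z')\in\Orb_i,(z',y')\in\Orb_j\}$, because $\G$ preserves each orbital under the induced action. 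Applying $g^{-1}$ gives the reverse inclusion, so the two sets have the same cardinality. Denote this common value by $p_{ij}^k$. Then $(A_iA_j)_{x,y}=p_{ij}^k=\sum_{k=1}^d p_{ij}^k(A_k)_{x,y}$, and (iv) follows.

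The main obstacle is really just the bookkeeping in (iv): one must verify that the bijection $z\mapsto g(z)$ genuinely sends the counting set to the counting set, which requires using the $\G$-invariance of $\Orb_i$ and $\Orb_j$ simultaneously. The other three items are essentially definition-chases once one has committed to the convention $\Orb_1=\{(x,x):x\in\Om\}$ guaranteed by transitivity. Everything reduces to the fact that both the row structure and column structure of $(x,z)$ and $(z,y)$ are preserved by the simultaneous action of $g$.
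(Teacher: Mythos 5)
Your proof is correct and is exactly the standard argument for orbital (2-orbit) coherent configurations: the paper itself gives no proof, deferring to the cited reference (Bannai--Ito, p.~52), and your verification of (i)--(iii) by definition-chasing plus the $\G$-equivariant bijection $z\mapsto g(z)$ establishing that the intersection numbers $p_{ij}^k$ are well defined in (iv) is precisely that standard route, including the correct use of transitivity to make the diagonal a single orbital.
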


\newcommand{\perm}{\pi}
\newcommand{\Real}{\mathbb{R}}

\newcommand{\Complex}{\mathbb{C}}
\newcommand{\Alg}{\mathcal{A}}

\newcommand{\lengthCC}{\tup{\Sym_n\times \Cyc, \Sym_n}}
\newcommand{\conjCC}{\tup{\Sym_n\times \Sym_n, \Sym_n}}
\newcommand{\Bm}[1]{A_{\Cyc,#1}}
\newcommand{\Am}[1]{A_{\Sym_n,#1}}

A \textbf{coherent configuration (CC)} denoted $(\G,\Om)$, refers to the set $\{A_1,A_2,\cdots, A_\d\}$ of corresponding adjacency matrices. 
A CC with the additional property $\p_{ij}^k = \p_{ji}^k$ is an \emph{association scheme}; in this special case, Delsarte showed how combinatorial properties can deliver linear programming (LP) bounds~\cite{Delserte}.
Construct two CC's related to the $\G$-invariances of the Hamming and \Kendall metrics.
For the former metric, set $\G = \Sym_n \times \Sym_n$ and call $\conjCC$ the \textbf{conjugacy CC} - the name comes from~\cite{Tarnanen}. 
For the latter metric, set $\G= \Sym_n \times \Cyc$ and term $\lengthCC$ the \textbf{length CC}. 
Let $\Real^{\Sym_n \times \Sym_n}$ denote the set of real matrices and index set $\Sym_n$. Write $\Am{i}$ and $\Bm{i}$ for adjacency matrices of conjugacy, and length CC.

\begin{eg} \label{eg:2}
The matrices in $\Real^{\Sym_3\times \Sym_3}$ corresponding to the conjugacy and length CC (the indexing on $\Sym_3$ is done in the same order that appears in Eg. \ref{eg:1}), are written as follows. First $\Am{1}=\Bm{1}=I$, where $I$ is the identity matrix. Next
\[
\Am{2} = 
\left[
\begin{array}{cccccc}
0     & 1     & 1     & 0     & 0     & 1 \\
1     & 0     & 0     & 1     & 1     & 0 \\
1     & 0     & 0     & 1     & 1     & 0 \\
0     & 1     & 1     & 0     & 0     & 1 \\
0     & 1     & 1     & 0     & 0     & 1 \\
1     & 0     & 0     & 1     & 1     & 0 \\
\end{array}
\right],~
\Bm{2} = 
\left[
\begin{array}{cccccc}
0     & 1     & 1     & 0     & 0     & 0 \\
1     & 0     & 0     & 1     & 0     & 0 \\
1     & 0     & 0     & 0     & 1     & 0 \\
0     & 1     & 0     & 0     & 0     & 1 \\
0     & 0     & 1     & 0     & 0     & 1 \\
0     & 0     & 0     & 1     & 1     & 0 \\
\end{array}%
\right].
\]
By Theorem \ref{thm:CC}, $\Am{3}=J - I -\Am{2}$, 
here $J$ has all ones. Finally, it so happens that we get $\Bm{3} = \Am{3}$ and $\Bm{4}  = \Am{2} - \Bm{2}$.
Matrices $\Am{1}$ to $\Am{3}$ corresponding to Hamming distances $0,2,3$, and $\Bm{1}$ to $\Bm{4}$ to \Kendall distances $0,1,2,3$. 
\end{eg}

The focus here is on the length CC, related to the \Kendall metric. The conjugacy CC (related to the Hamming metric) is actually an association scheme, and is treated in~\cite{Tarnanen}; 
the recollection is because of connections exploited later.

\section{Semidefinite Programming (SDP) Bounds} \label{sect:SDP}

\newcommand{\Csize}{k}
\newcommand{\A}{A}
\newcommand{\x}{b}
\newcommand{\Tr}{ \mbox{\upshape Tr}}
\renewcommand{\v}{\nu}
\renewcommand{\g}{x}
\renewcommand{\h}{y}
\newcommand{\Asym}{\tilde{A}}
\newcommand{\dsym}{{\tilde{d}}}
\newcommand{\Orbsym}{\tilde{\Orb}}
\newcommand{\del}{\delta}

A symmetric matrix $M $ in $ \Real^{\Sym_n\times\Sym_n}$ is \emph{positive semidefinite}, if all its eigenvalues are \emph{non-negative}.
We now use CC's to formulate the relaxation of the optimal code size problem.
By iii), Theorem \ref{thm:CC}, a set $\{\Asym_1,\Asym_2,\cdots, \Asym_\dsym\}$ of \textbf{symmetrized adjacency matrices} are obtained, whereby $\dsym\leq \d$. 
If $A_i$ is not symmetric, then find $A_{i'}$ such that $A_i^\tpr = A_{i'}$, and set $\tilde{A}_j = A_i + A_{i'}$. Similarly the \textbf{symmetrized orbitals} $\Orbsym_i$ are obtained by setting $\Orbsym_j = \Orb_i \cup \Orb_{i'}$ if $\Asym_{j} = A_{i} + A_{i'}$.
Note both $(g,h)$ and $(h,g)$ belong to the same $\Orbsym_j$,
and $\dist(g,h)=\dist(h,g)$. 
Thus by $\G$-invariance of $\dist(,)$ set $ \delta_j = \dist(g,h) $ for any $(g,h)\in \Orbsym_j$, since $\dist(g,h)=\dist(g',h')$ 
for any $(g,h), (g',h')\in\Orbsym_j$. 
The values $\delta_j$ are called \textbf{orbit-distances}
(with respect to a $\G$-invariant metric $\dist(,)$). If $\G$ acts transitively on $\Sym_n$, then by convention $\Orbsym_1=\{(g,g): g \in \Sym_n\}$, thus $\del_j\geq1$ for all $j\geq 2$.
The properties of the CC's can simplify the following optimizations. 

\begin{defn}[Primal SDP, $\tup{\G,\Sym_n}$ and $\D$] \label{def:SDP}
Let $\G$ be a group which acts on $\Sym_n$ transitively and $\dist(,)$ a $\G$-invariant distance metric on $\Sym_n$. 
Let $\del_j$ be the orbit-distances w.r.t. $(\G,\Sym_n)$ and $\dist(,)$.
Define the \textbf{semidefinite programming (SDP)} problem correp. to $\tup{\G,\Sym_n}$ and some $\D\geq 1$, as 
\begin{align} 
	  &\max_{M \in \Real^{\Sym_n \times \Sym_n} } \Tr{(J M)} \label{eqn:SDP}  \\
	\mbox{s.t. }& M~\mbox{is positive semidefinite, and } \Tr(M)=1, \nn
	&\begin{array}{cll}
		\Tr(\Asym_j M)  &\geq 0, &\mbox{ for } 2 \le j \le \dsym,\nn
		\Tr(\Asym_j M)  &= 0, &\mbox{ for } 2 \le j \le \dsym \mbox{ with } \del_j < \D,
	\end{array}
\end{align}
where $\Asym_j$ is a corresponding symmetrized adjacency matrix,
$J$ is the all-one matrix, and
$\Tr$ is the trace function.
\end{defn}

\newcommand{\syma}{\dagger}
\newcommand{\symb}{\ddag}
\newcommand{\symc}{\diamond}

\begin{table}
	\centering
	\begin{threeparttable}[t]
	
	\caption{[Initial Experiments] SDP Bounds for $3\leq n \leq 5$}
\renewcommand{\arraystretch}{.6}
\renewcommand{\c}{@{\hspace{1.2ex}}c@{\hspace{1.2ex}}}
\begin{tabular}{\c\c\c\c\c\c\c\c\c\c\c\c\c}
\toprule
\multicolumn{5}{c}{$n =3$}            & \multicolumn{5}{c}{$n =5$}            & \\
\midrule
$\dmin$ & $b_1^*$ & SB    & HB    & \textit{Search} & $\dmin$ & $b_1^*$ & SB    & HB    & \textit{Search} \\
1     & \textbf{6} & 6     & 6     & \textit{\textbf{6}} & 1     & \textbf{120} & 120   & 120   & \textit{\textbf{120}} \\
2     & \textbf{3} & 6     & 6     & \textit{\textbf{3}} & 2     & \textbf{60} & 120   & 120   & - \\
3     & \textbf{2} & 2     & 2     & \textit{\textbf{2}} & 3     & \textbf{22} & 120   & 24    & - \\
\multicolumn{5}{c}{$n =4$}            & 4     & \textbf{14} & 120   & 24    & - \\
$\dmin$ & $b_1^*$ & SB    & HB    & \textit{Search} & 5     & \textbf{7} & 24    & 8     & - \\
1     & \textbf{24} & 24    & 24    & \textit{\textbf{24}} & 6     & \textbf{5} & 24    & 8     & \textit{3} \\
2     & \textbf{12} & 24    & 24    & \textit{\textbf{12}} & 7     & \textbf{3} & 24    & 4     & \textit{2} \\
3     & \textbf{5} & 24    & 6     & \textit{\textbf{5}} & 8     & \textbf{2} & 6     & 4     & \textit{\textbf{2}} \\
4     & \textbf{3} & 6     & 6     & \textit{\textbf{3}} & 9     & \textbf{2} & 6     & 2     & \textit{\textbf{2}} \\
5     & \textbf{2} & 6     & 2     & \textit{\textbf{2}} & 10    & \textbf{2} & 2     & 2     & \textit{\textbf{2}} \\
6     & \textbf{2} & 2     & 2     & \textit{\textbf{2}} &       &       &       &       &  \\
\bottomrule
\end{tabular}%

	\label{tab:SDP}
	\begin{tablenotes}
	\item [${}^\syma$] Singleton bound (SB), published in~\cite{Barg2009}, equation (5).
	\item [${}^\symb$] Hamming bound (HB) from ball-size estimates, see~\cite{Margolius2001,Barg2009}. 
	\item []  {\em Note: Above table created by taking numerical floor.}
	\end{tablenotes}
	\end{threeparttable}
	\vspace*{-15pt}
\end{table}

\begin{pro}\label{pro:SDPrelax}
Let $\G$ be a group which acts on $\Sym_n$ transitively
and
$\dist(,)$ a $\G$-invariant distance metric on $\Sym_n$.
Let $\dmin \geq1$. 
Then, the optimal objective value of (\ref{eqn:SDP})
 upper bounds the optimal objective value 
 of (\ref{eqn:opt1}) for $\dist(,)$ and $\D$.
\end{pro}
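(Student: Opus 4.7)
The plan is to show that every permutation code $\S\subseteq\Sym_n$ of minimum distance at least $\D$ yields a feasible matrix $M$ for the SDP (\ref{eqn:SDP}) with objective value $\Tr(JM)=\#\S$. Applying this to an optimal code then shows $\Asize(n,\D)$ is bounded above by the SDP optimum.

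The construction I would use is the standard rank-one lift. Let $\chi_\S\in\Real^{\Sym_n}$ be the 0-1 indicator vector of $\S$, and set
\begin{equation}
M \;=\; \frac{1}{\#\S}\,\chi_\S\,\chi_\S^\tpr.\nonumber
\end{equation}
Since $M$ is a nonnegative scalar multiple of an outer product, it is positive semidefinite of rank one; also $\Tr(M)=\chi_\S^\tpr\chi_\S/\#\S=1$. For the objective, write $J=\1\1^\tpr$, so $\Tr(JM)=(\1^\tpr\chi_\S)^2/\#\S=(\#\S)^2/\#\S=\#\S$, as desired.

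It remains to verify the two families of constraints. For any symmetrized adjacency matrix $\Asym_j$ (which is 0-1),
\begin{equation}
\Tr(\Asym_j M) \;=\; \frac{1}{\#\S}\,\chi_\S^\tpr\Asym_j\chi_\S \;=\; \frac{1}{\#\S}\,\#\bigl\{(g,h)\in\S\times\S : (g,h)\in\Orbsym_j\bigr\},\nonumber
\end{equation}
which is manifestly nonnegative, giving the inequality constraints for $j\ge 2$. For the equality constraints, suppose $j\ge 2$ and $\del_j<\D$. By the transitivity convention $\Orbsym_1=\{(g,g):g\in\Sym_n\}$, so any $(g,h)\in\Orbsym_j$ with $j\ge 2$ has $g\neq h$; by $\G$-invariance of $\dist$, every such pair satisfies $\dist(g,h)=\del_j<\D$. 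Since $\S$ has minimum distance $\geq\D$, no such pair can lie in $\S\times\S$, so the count above is $0$ and $\Tr(\Asym_j M)=0$.

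There is no serious obstacle here; the only points that require a moment of care are (i) that the indexing convention $\Orbsym_1=\{(g,g)\}$ is what rules out diagonal pairs from the low-distance orbits and validates the equality constraints, and (ii) that the symmetrization step from Theorem \ref{thm:CC} preserves both the 0-1 structure and the well-definedness of the orbit-distance $\del_j$. Taking $\S$ to be an optimal $\nD$ code then gives $\Asize(n,\D)=\#\S=\Tr(JM)\le$ (SDP optimum), completing the proof.
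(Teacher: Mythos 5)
Your proposal is correct and follows essentially the same route as the paper's own proof: the rank-one lift $M=\frac{1}{\#\S}\chi_\S\chi_\S^\tpr$ built from the code's indicator vector, verification that it is positive semidefinite with unit trace and objective value $\#\S$, nonnegativity of $\Tr(\Asym_j M)$ as a scaled count of code pairs in $\Orbsym_j$, and the minimum-distance argument forcing the equality constraints when $\del_j<\D$. Your added remarks on the indexing convention for $\Orbsym_1$ and the well-definedness of $\del_j$ under symmetrization only make explicit points the paper leaves implicit.
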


\ifthenelse{\boolean{longver}}
{The SDP (\ref{eqn:SDP}) is a relaxation of the optimal code size problem (\ref{eqn:opt1}), see appendix for proof.}
{The SDP (\ref{eqn:SDP}) is a relaxation of the optimal code size problem (\ref{eqn:opt1}), see~\cite{FH} for the proof.} 
The optimal value of the SDP (\ref{eqn:SDP}) is at most $\size{\Sym_n}$, as for any feasible $M$, we have $\Tr(JM)\leq \Tr(J)=\size{\Sym_n}$. 
Software like SeDuMi~\cite{Sturm1999} can solve SDP's.

\begin{eg} \label{eg:3}
Consider $\G=\Sym_3 \times \Psi_3$, whereby the \Kendall metric is $\G$-invariant. 
Let $\Orbsym_1$ to $\Orbsym_4$ correspond to $\Bm{1}$ to $\Bm{4}$ (all symmetric). 
Using SeDuMi we solve for 
$\dmin =1$, 2 and 3, and get the optimal solutions 
\[
\frac{1}{6} \cdot J,~~~
\frac{1}{6} \cdot (\Bm{1} +\Bm{3}),~~~
\frac{1}{6} \cdot (\Bm{1}+\Bm{4}).
\] 
which correspond to optimal objective values $6,3$ and $2$. 
\end{eg}

 
We need to work with the dual problem to (\ref{eqn:SDP}). 

\begin{defn}[Dual problem, $\tup{\G,\Sym_n}$ and $\D$] \label{def:SDP_Dual}
Let $\G$ be a group which acts on $\Sym_n$ transitively and $\dist(,)$ a $\G$-invariant distance metric on $\Sym_n$. 
Let $\del_j$ be the orbit-distances w.r.t. $(\G,\Sym_n)$ and $\dist(,)$.
Let $\Asym_j$ be a corresponding symmetrized adjacency matrix to $(\G,\Sym_n)$. 
Let $\dmin \geq 1$.
Define the following 
\begin{align}
      & \min_{(b_1, b_2, \dots, b_{\dsym}) \in \Real^{\dsym}} b_1 \label{eqn:SDP_Dual} \\
\mbox{s.t. }
    & b_j \leq 0\mbox{ for } 2 \le j \le \dsym \mbox{ with } \del_j \geq \D, \nn
        & \sum_{j=1}^\dsym b_j \Asym_j  - J \mbox{ is positive semidefinite}, \nonumber
\end{align}
to be the \textbf{dual problem} of the SDP in Definition 4.
\end{defn}

Any feasible solution $b $ in $\Real^{\dsym}$ to the dual program (\ref{eqn:SDP_Dual}), provides an upper bound to the optimal objective value of the SDP (\ref{eqn:SDP}), see~\cite{SDP_Dual}; we have the following chain of inequalities
\bea
\Tr(J M^*) \leq b_1^* \leq b_1, \label{eqn:chain}
\eea
where $M^*$ and $b^*$ are optimal solns. of (\ref{eqn:SDP}) and (\ref{eqn:SDP_Dual}), resp.

Our interest in SDP bounds is motivated by initial experimentation.
Table \ref{tab:SDP} shows optimal objective values of (\ref{eqn:SDP_Dual}) obtained using SeDuMi, for (small) $n=3$ to $5$. We compare with two other bounds, i) a \emph{Singleton bound} (SB) recently published in~\cite{Barg2009}, and ii) a \emph{Hamming bound} (HB) obtained by sphere packing, see~\cite{Barg2009}. Ball-sizes for HB were obtained from \emph{exact} numbers of permutations with $k$ inversions~\cite{Margolius2001}. 
For cases shown, SDP bounds always perform the best, 
with some tightness verified by limited \emph{exhaustive searches}.
Given that optimization-based bounds are (some of) the best-known for binary codes, e.g. see discussion in~\cite{Navon2007}, 
it is not unusual to ask: \textbf{for permutation codes, are SDP bounds always better for all $n$?} 

\newcommand{\dthe}{{\tilde{d}_{\Theta_n}}}
\begin{table}
	\centering
	\caption{Number $d$ of adjacency matrices}
		\renewcommand{\arraystretch}{.6}

\begin{tabular}{cccc|cccc}
\toprule
$n$   & Len. & Conj.  & $\dthe$ & $n$   & Len. & Conj.  & $\dthe$ \\
\midrule
4     & 13    & 5     & 8     & 8     & 10558 & 22    & 171 \\
5     & 45    & 7     & 21    & 9     & 92126 & 30    & 860 \\
6     & 230   & 11    & 34    & 10    & 912908 & 42    & 1052 \\
7     & 1388  & 15    & 122   & 11    & 9998008 & 56    & 7578 \\
\bottomrule
\end{tabular}%

	\label{tab:dsym}
	\vspace*{-15pt}
\end{table}

\renewcommand{\fn}{\footnote{For the case $\Sym_6$ we tried using the techniques in~\cite{ISDP} to produce results for Table \ref{tab:SDP}, but SeDuMi was unable to compute.}}
To seek an answer we should compute for larger $n$, thus motivating the proposed method in the next section.
When ${\Sym_n}$ gets large, problems (\ref{eqn:SDP}) and (\ref{eqn:SDP_Dual}) become increasingly difficult to solve, as the matrices $\tilde{A}_j$ have order $\size{\Sym_n}$. 
Our method is inspired by recent work~\cite{ISDP,Schrijver2005,SDP_Dual}, which show that 
if $\tilde{A}_j$ come from a CC, then the
$\tilde{A}_j$ can be replaced (in (\ref{eqn:SDP}) and (\ref{eqn:SDP_Dual})) by \emph{block-diagonalized} versions - exact details omitted here. 
This may result in huge complexity reduction,
\emph{e.g.},~\cite{SDP_Dual} shows how SDP's related to the conjugacy CC 
reduces to simpler LP problems. 
The caveat is that number of matrix blocks (obtained from diagonalization) is at least $\d$, the number of adjacency matrices $A_i$, see~\cite{ISDP}.
Unfortunately for the length CC, this number quickly becomes large for increasing $n$, see Table \ref{tab:dsym}. 
Thus in our case it becomes difficult to directly apply the techniques in~\cite{ISDP}, and modifications of the ideas are needed.


\section{Length CC: Linear Programming (LP) Bounds}

\newcommand{\AlgG}{\Alg_{\G,\Sym_n}}
\newcommand{\AlgT}{\Alg_{\Sym_n \times \Cyc,\Sym_n}}
\newcommand{\AlgO}{\Alg_{\Sym_n \times \Sym_n,\Sym_n}}
\newcommand{\Is}{\tilde{\mathcal{I}}}
\newcommand{\Iss}{{\mathcal{I}}}
\renewcommand{\fn}{\footnote{The set $\AlgT$ is usually known as the adjacency algebra (over $\Real$) of the CC $(\G,\Sym_n)$, which has the properties of a matrix-$\tp$ algebra~\cite{BI}.}}

Using ``duality'' we consider the feasible solutions $b$ to (\ref{eqn:SDP_Dual}) (for some $\G$-invariant $\dist(,)$ and $\dmin\geq 1$) that furnish upper estimates $b_1$ to $\Asize(n,\D)$, see (\ref{eqn:chain}) and Proposition \ref{pro:SDPrelax}. 
While ``duality'' ideas are not new, the novelty here is to ``guess a good subset'' of feasible solutions (in the dual program) described 
by a manageable number of linear equations, and use an LP to optimize over them.
For a CC $(\G,\Sym_n)$,
 a feasible solution $b$ corresponds to a positive semidefinite matrix
 in the following set\fn 
\bea
\AlgG \define \left\{ \sum_{i=1}^\dsym b_j \tilde{A}_j : b_j \in \Real, \mbox{ for all }  1\leq j\leq\dsym \right\}. \label{eqn:AlgT}
\eea
Recall that the all-ones matrix $J$ is also in $\AlgG$.

\renewcommand{\d}{{d_{\Sym_n}}}
\renewcommand{\dsym}{{d_{\Cyc}}}
To build an intuition on how such a strategy is possible,
 we first connect with the LP bound of the conjugacy CC $\conjCC$ described
 in~\cite{Tarnanen}. 
To clarify between conjugacy and length CC's,
 we respectively denote $\Am{i}$ and $\Bm{i}$ for adjacency matrices,
 and $\d$ and $\dsym$ for their numbers.

We claim that the set $\AlgO$ is a subset of $\AlgT$,
 seen by showing each $A_i$ to lie in $\AlgT$.
Observe that $\Sym_n\times\Cyc$ is a subgroup of $\Sym_n\times\Sym_n$, hence the orbitals of the length CC, lie within those of the conjugacy CC. In other words, there exists index subsets $\Iss_{\Sym_n, i}$, where $\cup_{i=1}^\d \Iss_{\Sym_n, i} = \{1,2,\cdots,\dsym\}$, such that $\Am{i} = \sum_{j\in \Iss_{\Sym_n, i}} \Bm{j}$ hold (for all $i$).
The claim $\Am{i} \in \AlgT$ follows if $\Am{i}$ is a symmetric matrix, see property i) of the following theorem from~\cite{Tarnanen}.

\newcommand{\AmT}[1]{\tilde{A}_{\Sym_n,#1}}
\begin{thm}[c.f.~\cite{Tarnanen}] \label{thm:conjCC}
Let $\tup{\Sym_n\times\Sym_n,\Sym_n}$ denote the conjugacy CC, 
where $\conjCC = \{\Am{i}: 1\leq i \leq \d\}$, and $\Am{1}=I$. 
Then all of the following hold for $\Am{i}$:
\bitm
	\item[i)] symmetry, {i.e.}, $\Am{i}^T = \Am{i}$ (or $\AmT{i}=\Am{i}$).
	\item[ii)] commutativity, {i.e.}, $\Am{i} \Am{j} = \Am{j} \Am{i}$ for all $i,j$.
	\item[iii)] diagonalization by an orthonormal matrix $U$ in $\Real^{\Sym_n\times \Sym_n}$, {i.e.}, $U^T \Am{i} U =  \sum_{j=1}^\d \p_{i,j} \cdot I_j$ for some $\p_{i,j}\in \Real$ and 0-1 diagonal matrix $I_j$. 
	\bitm 
		\item[$\bullet$] $I = \Am{1} =  \sum_{j=1}^\d U I_j U^T $, therefore $\sum_{j=1}^\d I_j  = I$.
		\item[$\bullet$] $\sum_{i=1}^\d \Am{i} = J$, so $U^T J U = \sum_{j=1}^d c_j \cdot I_j$ where $c_j = \sum_{i=1}^d \p_{i,j}$. By convention $c_1=\size{\Sym_n}$ (the only non-zero eigenvalue of $J$) and $c_j=0$ for $j\geq2$.
	\eitm
\eitm
\end{thm}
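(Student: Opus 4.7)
The plan is to reduce everything to the group algebra $\Real[\Sym_n]$ and its center. First I would identify the orbital structure of the conjugacy CC: from the defining action $(\alpha,\beta)\cdot g = \alpha g \beta\Iv$, the induced action on pairs satisfies $(\alpha x\beta\Iv)(\alpha y\beta\Iv)\Iv = \alpha(xy\Iv)\alpha\Iv$, so two pairs lie in the same orbital exactly when $xy\Iv$ and $x'(y')\Iv$ are $\Sym_n$-conjugate. Hence the $\d$ orbitals correspond bijectively to the conjugacy classes $C_1,\dots,C_\d$ of $\Sym_n$, and $(\Am{i})_{x,y}=1$ iff $xy\Iv\in C_i$.

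Property (i) then follows immediately: $(\Am{i}^\tpr)_{x,y} = 1$ iff $yx\Iv = (xy\Iv)\Iv \in C_i$, and since inversion preserves cycle type in $\Sym_n$, every permutation is conjugate to its inverse, so $(xy\Iv)\Iv\in C_i \Leftrightarrow xy\Iv\in C_i$. For (ii), I would introduce the left-regular permutation matrices $(P_g)_{x,y}=1$ iff $xy\Iv = g$, which satisfy $P_g P_h = P_{gh}$, and observe $\Am{i} = \sum_{g\in C_i} P_g = \rho(z_i)$, the image of the class sum $z_i = \sum_{g\in C_i} g$ under the faithful algebra homomorphism $\rho:\Real[\Sym_n]\to\Real^{\Sym_n\times\Sym_n}$ extending $g\mapsto P_g$. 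Since $hC_ih\Iv=C_i$ for every $h$, each $z_i$ lies in the center of $\Real[\Sym_n]$, hence $z_i z_j = z_j z_i$, and applying $\rho$ gives $\Am{i}\Am{j}=\Am{j}\Am{i}$.

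For (iii), properties (i) and (ii) make $\{\Am{i}\}$ a commuting family of real symmetric matrices, so the spectral theorem delivers an orthonormal $U$ that simultaneously diagonalizes them. The $\Am{i}$ are 0-1 matrices whose 1-entries partition $\Sym_n\times\Sym_n$, hence are linearly independent and span a $\d$-dimensional commutative subalgebra; its conjugate by $U$ is a $\d$-dimensional commutative subalgebra of diagonal matrices, which admits a unique basis of $\d$ mutually orthogonal 0-1 primitive idempotents $I_1,\dots,I_\d$ summing to $I$. Expanding $U^\tpr \Am{i} U$ in this basis yields the scalars $p_{i,j}$. The first bullet is then immediate from $\Am{1}=I$, which forces $p_{1,j}=1$ for every $j$ and $\sum_j I_j = I$. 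For the second bullet, $J=\sum_i \Am{i}$ is rank one with single nonzero eigenvalue $\size{\Sym_n}$, so $U^\tpr J U$ is supported on a single idempotent, labelled $I_1$ by convention, giving $c_1=\size{\Sym_n}$ and $c_j=0$ for $j\geq 2$.

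The only conceptually non-routine step is the identification in (ii) of $\Am{i}$ as the image of a central element under the regular representation; once this is in place, centrality of class sums forces commutativity, (i) is a one-line cycle-type observation, and (iii) reduces to standard spectral theory together with the rank-one remark for $J$. I expect no serious obstacles beyond carefully tracking the correspondence between orbitals, conjugacy classes, and class sums in $\Real[\Sym_n]$.
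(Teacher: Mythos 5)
Your argument is correct, but note that the paper never proves this theorem itself: it is imported from Tarnanen (see also Bannai--Ito), so the comparison is really with the machinery the paper develops in its appendix. Your key step --- orbitals of the conjugacy CC are indexed by the conjugacy classes of $\Sym_n$, and each adjacency matrix is the image of a class sum under the regular representation, hence central and commuting --- is precisely the specialization to $\mathcal{Z}=\Sym_n$ of the identity $A_{\mathcal{Z},i}=\sum_{\beta\in\mathcal{C}_{\mathcal{Z},i}}\rho(\beta)$ that the paper establishes on the way to Lemma \ref{lem:1}, up to a left/right convention swap: you use the invariant $xy^{-1}$ and left-multiplication matrices, the paper uses $x^{-1}y$ and right multiplication, and the two parametrizations agree here because conjugacy classes of $\Sym_n$ are closed under inversion (the same fact you invoke for symmetry in (i)). Given (i) and (ii), your part (iii) is the standard simultaneous orthogonal diagonalization plus the rank-one observation for $J$, which is exactly how this is done in the association-scheme literature. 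The one place you assert more than you argue is that the diagonalized algebra carries exactly $d_{\Sym_n}$ mutually orthogonal 0--1 idempotents: this does follow, because the $A_{\Sym_n,i}$ have disjoint supports (so the algebra they span has dimension $d_{\Sym_n}$) and a unital commutative algebra of diagonal matrices that separates its coordinate equivalence classes is the full function algebra on those classes, but a sentence to this effect should be added; it is a routine fill-in, not a gap.
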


\newcommand{\BmT}[1]{\tilde{A}_{\Cyc,#1}}

The numbers $\d$, tabulated in Table \ref{tab:dsym},
 equal the \emph{partition number} of $n$, see~\cite{Tarnanen}.
 Consider a matrix $\sum_{j=1}^\dsym b_j \BmT{j}$ in $\AlgT$, that for some $a\in \Real^\d$,
 can be expressed as $\sum_{i=1}^d a_i \Am{i}$. 
Theorem \ref{thm:conjCC} allows us to further express $\sum_{j=1}^\dsym b_j \BmT{j} = \sum_{j=1}^\d z_j \cdot  (U I_j U^T)$ where $z_j =  \sum_{i=1}^\d \p_{i,j} a_i $. 
Then $\sum_{j=1}^\dsym b_j \BmT{j} -J $ is positive semidefinite (see (\ref{eqn:SDP_Dual})) if the linear constraints $\sum_{i=1}^\d \p_{i,j} a_i \geq c_j$ hold for all $j$, for constants $c_j$ in iii). 
Intuitively, Theorem \ref{thm:conjCC} is an explicit ``diagonalization'' of all matrices in the subset $\AlgO$ of $\AlgT$, and facilitates checking of positive semidef.

\begin{table*}
\centering
	\begin{threeparttable}[t]
	\caption{Bounds computed for various $3 \leq n \leq 11$.}
\renewcommand{\arraystretch}{.6}
\begin{tabular}{rrrrrrrrrrrrrrccc}
\toprule
\multicolumn{4}{c}{$n =3$}    & \multicolumn{4}{c}{$n =7$}    & \multicolumn{3}{c}{$n =9$} & \multicolumn{3}{c}{$n=10$} & \\
\midrule
\multicolumn{1}{c}{$\dmin$} & \multicolumn{1}{c}{LP} & \multicolumn{1}{c}{SB\tnote{$\syma$}} & \multicolumn{1}{c}{HB\tnote{$\symb$}} & \multicolumn{1}{c}{$\dmin$} & \multicolumn{1}{c}{LP} & \multicolumn{1}{c}{SB} & \multicolumn{1}{c}{HB} & \multicolumn{1}{c}{$\dmin$} & \multicolumn{1}{c}{LP} & \multicolumn{1}{c}{SB} & \multicolumn{1}{c}{$\dmin$} & \multicolumn{1}{c}{LP} & \multicolumn{1}{c}{SB} \\
\multicolumn{1}{c}{1} & \multicolumn{1}{c}{\textbf{6}} & \multicolumn{1}{c}{6} & \multicolumn{1}{c}{6} & \multicolumn{1}{c}{10} & \multicolumn{1}{c}{5040} & \multicolumn{1}{c}{720} & \multicolumn{1}{c}{28} & \multicolumn{1}{c}{14} & \multicolumn{1}{c}{362880} & \multicolumn{1}{c}{40320} & \multicolumn{1}{c}{42} & \multicolumn{1}{c}{\textbf{6}} & \multicolumn{1}{c}{24} \\
\multicolumn{1}{c}{2} & \multicolumn{1}{c}{\textbf{3}} & \multicolumn{1}{c}{6} & \multicolumn{1}{c}{6} & \multicolumn{1}{c}{11} & \multicolumn{1}{c}{\textbf{630}} & \multicolumn{1}{c}{720} & \multicolumn{1}{c}{14} & \multicolumn{1}{c}{15} & \multicolumn{1}{c}{45360} & \multicolumn{1}{c}{40320} & \multicolumn{1}{c}{43} & \multicolumn{1}{c}{\textbf{3}} & \multicolumn{1}{c}{6} \\
\multicolumn{1}{c}{3} & \multicolumn{1}{c}{\textbf{2}} & \multicolumn{1}{c}{2} & \multicolumn{1}{c}{2} & \multicolumn{1}{c}{12} & \multicolumn{1}{c}{543} & \multicolumn{1}{c}{120} & \multicolumn{1}{c}{14} & \multicolumn{1}{c}{16} & \multicolumn{1}{c}{32989} & \multicolumn{1}{c}{5040} & \multicolumn{1}{c}{44} & \multicolumn{1}{c}{\textbf{3}} & \multicolumn{1}{c}{6} \\
\multicolumn{1}{c}{} & \multicolumn{1}{c}{} & \multicolumn{1}{c}{} & \multicolumn{1}{c}{} & \multicolumn{1}{c}{15} & \multicolumn{1}{c}{140} & \multicolumn{1}{c}{120} & \multicolumn{1}{c}{5} & \multicolumn{1}{c}{23} & \multicolumn{1}{c}{7560} & \multicolumn{1}{c}{720} & \multicolumn{1}{c}{45} & \multicolumn{1}{c}{\textbf{2}} & \multicolumn{1}{c}{2} \\
\multicolumn{4}{c}{$n =4$}    & \multicolumn{1}{c}{16} & \multicolumn{1}{c}{75} & \multicolumn{1}{c}{24} & \multicolumn{1}{c}{5} & \multicolumn{1}{c}{25} & \multicolumn{1}{c}{2016} & \multicolumn{1}{c}{720} &       &       &  \\
\multicolumn{1}{c}{$\dmin$} & \multicolumn{1}{c}{LP} & \multicolumn{1}{c}{SB} & \multicolumn{1}{c}{HB} & \multicolumn{1}{c}{17} & \multicolumn{1}{c}{\textbf{14}} & \multicolumn{1}{c}{24} & \multicolumn{1}{c}{3} & \multicolumn{1}{c}{27} & \multicolumn{1}{c}{1008} & \multicolumn{1}{c}{120} & \multicolumn{3}{c}{$n=11$} & \\
\multicolumn{1}{c}{3} & \multicolumn{1}{c}{\textbf{24}} & \multicolumn{1}{c}{24} & \multicolumn{1}{c}{6} & \multicolumn{1}{c}{18} & \multicolumn{1}{c}{\textbf{7}} & \multicolumn{1}{c}{24} & \multicolumn{1}{c}{3} & \multicolumn{1}{c}{29} & \multicolumn{1}{c}{186} & \multicolumn{1}{c}{120} & \multicolumn{1}{c}{$\dmin$} & \multicolumn{1}{c}{LP} & \multicolumn{1}{c}{SB} \\
\multicolumn{1}{c}{4} & \multicolumn{1}{c}{12} & \multicolumn{1}{c}{6} & \multicolumn{1}{c}{6} & \multicolumn{1}{c}{19} & \multicolumn{1}{c}{\textbf{3}} & \multicolumn{1}{c}{6} & \multicolumn{1}{c}{2} & \multicolumn{1}{c}{30} & \multicolumn{1}{c}{\textbf{93}} & \multicolumn{1}{c}{120} & \multicolumn{1}{c}{18} & \multicolumn{1}{c}{39916800} & \multicolumn{1}{c}{3628800} \\
\multicolumn{1}{c}{5} & \multicolumn{1}{c}{\textbf{4}} & \multicolumn{1}{c}{6} & \multicolumn{1}{c}{2} & \multicolumn{1}{c}{20} & \multicolumn{1}{c}{\textbf{2}} & \multicolumn{1}{c}{6} & \multicolumn{1}{c}{2} & \multicolumn{1}{c}{31} & \multicolumn{1}{c}{\textbf{15}} & \multicolumn{1}{c}{24} & \multicolumn{1}{c}{19} & \multicolumn{1}{c}{\textbf{3326400}} & \multicolumn{1}{c}{3628800} \\
\multicolumn{1}{c}{6} & \multicolumn{1}{c}{\textbf{2}} & \multicolumn{1}{c}{2} & \multicolumn{1}{c}{2} & \multicolumn{1}{c}{21} & \multicolumn{1}{c}{\textbf{2}} & \multicolumn{1}{c}{2} & \multicolumn{1}{c}{2} & \multicolumn{1}{c}{32} & \multicolumn{1}{c}{\textbf{9}} & \multicolumn{1}{c}{24} & \multicolumn{1}{c}{31} & \multicolumn{1}{c}{359611} & \multicolumn{1}{c}{40320} \\
\multicolumn{1}{c}{} & \multicolumn{1}{c}{} & \multicolumn{1}{c}{} & \multicolumn{1}{c}{} & \multicolumn{1}{c}{} & \multicolumn{1}{c}{} & \multicolumn{1}{c}{} & \multicolumn{1}{c}{} & \multicolumn{1}{c}{33} & \multicolumn{1}{c}{\textbf{4}} & \multicolumn{1}{c}{24} & \multicolumn{1}{c}{33} & \multicolumn{1}{c}{193458} & \multicolumn{1}{c}{40320} \\
\multicolumn{4}{c}{$n =5$}    & \multicolumn{4}{c}{$n =8$}    & \multicolumn{1}{c}{34} & \multicolumn{1}{c}{\textbf{3}} & \multicolumn{1}{c}{6} & \multicolumn{1}{c}{34} & \multicolumn{1}{c}{177678} & \multicolumn{1}{c}{40320} \\
\multicolumn{1}{c}{$\dmin$} & \multicolumn{1}{c}{LP} & \multicolumn{1}{c}{SB} & \multicolumn{1}{c}{HB} & \multicolumn{1}{c}{$\dmin$} & \multicolumn{1}{c}{LP} & \multicolumn{1}{c}{SB} & \multicolumn{1}{c}{HB} & \multicolumn{1}{c}{35} & \multicolumn{1}{c}{\textbf{2}} & \multicolumn{1}{c}{6} & \multicolumn{1}{c}{35} & \multicolumn{1}{c}{94924} & \multicolumn{1}{c}{5040} \\
\multicolumn{1}{c}{6} & \multicolumn{1}{c}{120} & \multicolumn{1}{c}{24} & \multicolumn{1}{c}{8} & \multicolumn{1}{c}{12} & \multicolumn{1}{c}{40320} & \multicolumn{1}{c}{5040} & \multicolumn{1}{c}{64} & \multicolumn{1}{c}{36} & \multicolumn{1}{c}{\textbf{2}} & \multicolumn{1}{c}{2} & \multicolumn{1}{c}{37} & \multicolumn{1}{c}{66176} & \multicolumn{1}{c}{5040} \\
\multicolumn{1}{c}{7} & \multicolumn{1}{c}{\textbf{10}} & \multicolumn{1}{c}{24} & \multicolumn{1}{c}{4} & \multicolumn{1}{c}{13} & \multicolumn{1}{c}{\textbf{5040}} & \multicolumn{1}{c}{5040} & \multicolumn{1}{c}{32} &       &       &       & \multicolumn{1}{c}{41} & \multicolumn{1}{c}{33662} & \multicolumn{1}{c}{720} \\
\multicolumn{1}{c}{8} & \multicolumn{1}{c}{\textbf{5}} & \multicolumn{1}{c}{6} & \multicolumn{1}{c}{4} & \multicolumn{1}{c}{14} & \multicolumn{1}{c}{4135} & \multicolumn{1}{c}{720} & \multicolumn{1}{c}{32} & \multicolumn{3}{c}{$n=10$} & \multicolumn{1}{c}{42} & \multicolumn{1}{c}{26050} & \multicolumn{1}{c}{720} \\
\multicolumn{1}{c}{9} & \multicolumn{1}{c}{\textbf{2}} & \multicolumn{1}{c}{6} & \multicolumn{1}{c}{2} & \multicolumn{1}{c}{19} & \multicolumn{1}{c}{896} & \multicolumn{1}{c}{120} & \multicolumn{1}{c}{7} & \multicolumn{1}{c}{$\dmin$} & \multicolumn{1}{c}{LP} & \multicolumn{1}{c}{SB} & \multicolumn{1}{c}{43} & \multicolumn{1}{c}{11152} & \multicolumn{1}{c}{720} \\
\multicolumn{1}{c}{10} & \multicolumn{1}{c}{\textbf{2}} & \multicolumn{1}{c}{2} & \multicolumn{1}{c}{2} & \multicolumn{1}{c}{21} & \multicolumn{1}{c}{384} & \multicolumn{1}{c}{120} & \multicolumn{1}{c}{5} & \multicolumn{1}{c}{16} & \multicolumn{1}{c}{3628800} & \multicolumn{1}{c}{362880} & \multicolumn{1}{c}{44} & \multicolumn{1}{c}{8700} & \multicolumn{1}{c}{720} \\
      &       & \multicolumn{1}{c}{} & \multicolumn{1}{c}{} & \multicolumn{1}{c}{22} & \multicolumn{1}{c}{192} & \multicolumn{1}{c}{120} & \multicolumn{1}{c}{5} & \multicolumn{1}{c}{17} & \multicolumn{1}{c}{\textbf{329891}} & \multicolumn{1}{c}{362880} & \multicolumn{1}{c}{45} & \multicolumn{1}{c}{6349} & \multicolumn{1}{c}{720} \\
\multicolumn{4}{c}{$n =6$}    & \multicolumn{1}{c}{23} & \multicolumn{1}{c}{41} & \multicolumn{1}{c}{24} & \multicolumn{1}{c}{3} & \multicolumn{1}{c}{18} & \multicolumn{1}{c}{302400} & \multicolumn{1}{c}{40320} & \multicolumn{1}{c}{46} & \multicolumn{1}{c}{3541} & \multicolumn{1}{c}{120} \\
\multicolumn{1}{c}{$\dmin$} & \multicolumn{1}{c}{LP} & \multicolumn{1}{c}{SB} & \multicolumn{1}{c}{HB} & \multicolumn{1}{c}{24} & \multicolumn{1}{c}{\textbf{21}} & \multicolumn{1}{c}{24} & \multicolumn{1}{c}{3} & \multicolumn{1}{c}{27} & \multicolumn{1}{c}{49371} & \multicolumn{1}{c}{5040} & \multicolumn{1}{c}{47} & \multicolumn{1}{c}{222} & \multicolumn{1}{c}{120} \\
\multicolumn{1}{c}{8} & \multicolumn{1}{c}{720} & \multicolumn{1}{c}{120} & \multicolumn{1}{c}{14} & \multicolumn{1}{c}{25} & \multicolumn{1}{c}{\textbf{8}} & \multicolumn{1}{c}{24} & \multicolumn{1}{c}{2} & \multicolumn{1}{c}{29} & \multicolumn{1}{c}{21098} & \multicolumn{1}{c}{5040} & \multicolumn{1}{c}{48} & \multicolumn{1}{c}{\textbf{111}} & \multicolumn{1}{c}{120} \\
\multicolumn{1}{c}{9} & \multicolumn{1}{c}{\textbf{120}} & \multicolumn{1}{c}{120} & \multicolumn{1}{c}{7} & \multicolumn{1}{c}{26} & \multicolumn{1}{c}{\textbf{5}} & \multicolumn{1}{c}{6} & \multicolumn{1}{c}{2} & \multicolumn{1}{c}{31} & \multicolumn{1}{c}{9735} & \multicolumn{1}{c}{720} & \multicolumn{1}{c}{49} & \multicolumn{1}{c}{\textbf{17}} & \multicolumn{1}{c}{120} \\
\multicolumn{1}{c}{11} & \multicolumn{1}{c}{27} & \multicolumn{1}{c}{24} & \multicolumn{1}{c}{4} & \multicolumn{1}{c}{27} & \multicolumn{1}{c}{\textbf{2}} & \multicolumn{1}{c}{6} & \multicolumn{1}{c}{2} & \multicolumn{1}{c}{35} & \multicolumn{1}{c}{4995} & \multicolumn{1}{c}{720} & \multicolumn{1}{c}{50} & \multicolumn{1}{c}{\textbf{11}} & \multicolumn{1}{c}{24} \\
\multicolumn{1}{c}{12} & \multicolumn{1}{c}{\textbf{13}} & \multicolumn{1}{c}{24} & \multicolumn{1}{c}{4} & \multicolumn{1}{c}{28} & \multicolumn{1}{c}{\textbf{2}} & \multicolumn{1}{c}{2} & \multicolumn{1}{c}{2} & \multicolumn{1}{c}{36} & \multicolumn{1}{c}{3900} & \multicolumn{1}{c}{120} & \multicolumn{1}{c}{51} & \multicolumn{1}{c}{\textbf{5}} & \multicolumn{1}{c}{24} \\
\multicolumn{1}{c}{13} & \multicolumn{1}{c}{\textbf{6}} & \multicolumn{1}{c}{6} & \multicolumn{1}{c}{2} &       &       &       &       & \multicolumn{1}{c}{37} & \multicolumn{1}{c}{446} & \multicolumn{1}{c}{120} & \multicolumn{1}{c}{52} & \multicolumn{1}{c}{\textbf{4}} & \multicolumn{1}{c}{24} \\
\multicolumn{1}{c}{14} & \multicolumn{1}{c}{\textbf{4}} & \multicolumn{1}{c}{6} & \multicolumn{1}{c}{2} &       &       &       &       & \multicolumn{1}{c}{38} & \multicolumn{1}{c}{230} & \multicolumn{1}{c}{120} & \multicolumn{1}{c}{53} & \multicolumn{1}{c}{\textbf{3}} & \multicolumn{1}{c}{6} \\
\multicolumn{1}{c}{15} & \multicolumn{1}{c}{\textbf{2}} & \multicolumn{1}{c}{2} & \multicolumn{1}{c}{2} &       &       &       &       & \multicolumn{1}{c}{39} & \multicolumn{1}{c}{\textbf{55}} & \multicolumn{1}{c}{120} & \multicolumn{1}{c}{54} & \multicolumn{1}{c}{\textbf{2}} & \multicolumn{1}{c}{6} \\
      &       &       &       &       &       &       &       & \multicolumn{1}{c}{40} & \multicolumn{1}{c}{30} & \multicolumn{1}{c}{24} & \multicolumn{1}{c}{55} & \multicolumn{1}{c}{\textbf{2}} & \multicolumn{1}{c}{2} \\
      &       &       &       &       &       &       &       & \multicolumn{1}{c}{41} & \multicolumn{1}{c}{\textbf{11}} & \multicolumn{1}{c}{24} &       &       &  \\
\bottomrule
\end{tabular}%

	\label{tab:KT}
	\begin{tablenotes}
	\item [${}^\syma,{}^\symb$] See footnotes on previous Table \ref{tab:SDP}.
	\end{tablenotes}
	\end{threeparttable}
	\vspace*{-15pt}
\end{table*}

\renewcommand{\Cen}{\mathcal{B}}

\renewcommand{\fn}{\footnote{Try to show, see~\cite{BI}, pp. 50-51., that $\Am{i}$ and $\Am{i}W$ are \emph{conjugacy-sums}, and $\Cen$ in (\ref{eqn:Cen}) is the \emph{center} of the adjacency algebra (\ref{eqn:AlgT}) for $\G=\Sym_n\times\Cyc$.}}

A simple extension of the ``diagonalization'' idea to the following \emph{larger} subset of matrices, works reasonably well.
Property ii) of Theorem \ref{thm:conjCC} implies iii), as symmetric matrices that commute share \emph{common eigenspaces}. 
As such,
we desire\fn~a subset $\Cen$ of $\AlgT$, with the property that any $M\in \Cen$, commutes with any $M' \in \AlgT$. 
Thus any two matrices in $\Cen$ commute. 
Such a subset $\Cen$ may be obtained  
\bea
\Cen = \left\{ \sum_{i=1}^\d (a_i  \Am{i}) + \sum_{i=1}^\d (a_{\d+i}  \Am{i} W)  : a \in \Real^{2\d} \right\}, \label{eqn:Cen}
\eea
where $W$ is an orthonormal, 0-1 matrix in $\Real^{\Sym_n\times\Sym_n}$, that satisfies $(W)_{x,y}=1$ if and only if $y \wo\Iv = x$ for any $x,y\in \Sym_n$. 
From (\ref{eqn:Cen}) we see $\Cen$ contains the set $\AlgO$ considered in Theorem \ref{thm:conjCC}.
\ifthenelse{\boolean{longver}}{
Also by the previous correspondence between $B_j$ and the orbital $\Orb_j$, one can check (see appendix) $W$ commutes with all of $\AlgT$ (and each $\Am{i}$). 
}
{
Also by the previous corresponding between $B_j$ and the orbital $\Orb_j$ (see~\cite{FH}), one can check $W$ commutes with all of $\AlgT$ (and each $\Am{i}$). 
}
Because the longest element satisfies $\wo\Iv = \wo$, thus $W^T = W\Iv = W$. So $\Am{i} W$ are symmetric, and $\Cen$ is a set of symmetric matrices. 

\newcommand{\Bc}[1]{\tilde{A}_{\Theta_n,#1}}

One technical lemma, that connects (\ref{eqn:Cen}) with the dual problem (\ref{eqn:SDP_Dual}), stands in way of finally describing our LP bound.
This lemma involves a special subgroup $\Theta_n$ of $\Sym_n$, where $\Theta_n$ is also involved in a few final definitions. Let
$\Theta_n = \{\alpha\in\Sym_n: (\alpha,\alpha) \cdot \wo = \wo\}$, where $(\alpha,\alpha) \cdot \wo$ is computed using the action of $\Sym_n\times\Sym_n$ on $\Sym_n$. Let $\Bc{\ell}$ denote the symmetrized adjacency matrices belonging to the CC $(\Sym_n\times\Theta_n,\Sym_n)$, where there are $\dthe$ of them. Note $\dthe \leq \dsym$.

\begin{lem} \label{lem:1}
Let $\Am{i}$ and $\Bc{\ell}$ be the symmetrized adjacency matrices belonging to the conjugacy CC and $(\Sym_n\times\Theta_n,\Sym_n)$, respectively. 
Let $W$ be defined as before.
For $1\leq \ell\leq \dthe$ and $1 \leq i \leq 2\d$ there exists 0-1 coefficients $t_{\ell,i}$ that satisfy
\bea
  \!\!\!\!\!\Am{i} &=&  \sum_{\ell=1}^\dthe t_{\ell,i}   \Bc{\ell},~~
  \Am{i} W = \sum_{\ell=1}^\dthe t_{\ell,\d+i}   \Bc{\ell}. \label{eqn:T}
\eea
\end{lem}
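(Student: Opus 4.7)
The plan is to show that the supports (in $\Sym_n \times \Sym_n$) of both $\Am{i}$ and $\Am{i} W$ are unions of symmetrized orbitals of $\tup{\Sym_n \times \Theta_n, \Sym_n}$. The first identity follows from group containment; the second is the heart of the lemma and hinges on the defining property of $\Theta_n$ as the centralizer of $\wo$ in $\Sym_n$.

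For the first identity, since $\Sym_n \times \Theta_n$ is a subgroup of $\Sym_n \times \Sym_n$, every orbital of $\tup{\Sym_n \times \Theta_n, \Sym_n}$ lies inside a single orbital of the conjugacy CC. Thus each conjugacy-CC orbital $\Orb_i$ decomposes as a disjoint union of $(\Sym_n \times \Theta_n)$-orbitals. By Theorem \ref{thm:conjCC} (i), $\Am{i}$ is already symmetric; combined with the fact that coordinate swap commutes with the $(\Sym_n\times\Sym_n)$-action, the $\Theta_n$-orbitals refining $\Orb_i$ are closed under transposition, so they collect into the symmetrized matrices $\Bc{\ell}$ whose 0-1 sum equals $\Am{i}$, giving the coefficients $t_{\ell, i}$.

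For the second identity, the defining relation $W_{x, y} = 1 \iff y\wo\Iv = x$ gives $(\Am{i} W)_{x, y} = (\Am{i})_{x, y \wo}$, so the support of $\Am{i} W$ equals $S_i = \{(x, y) : (x, y\wo) \in \Orb_i\}$. The crucial step is to show $S_i$ is invariant under the $(\Sym_n \times \Theta_n)$-action. For $(\alpha, \beta) \in \Sym_n \times \Theta_n$ and $(x, y) \in S_i$, one must verify $(\alpha x \beta\Iv, \alpha y \beta\Iv \wo) \in \Orb_i$. Since $\beta \in \Theta_n$ satisfies $\beta\wo = \wo \beta$, equivalently $\beta\Iv \wo = \wo \beta\Iv$, one gets $\alpha y \beta\Iv \wo = \alpha (y\wo) \beta\Iv$, and $(\Sym_n \times \Sym_n)$-invariance of $\Orb_i$ applied to $(x, y\wo)$ closes the argument. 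Hence $S_i$ is a union of $(\Sym_n \times \Theta_n)$-orbitals.

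To secure 0-1 coefficients on the symmetrized $\Bc{\ell}$, $S_i$ must also be closed under coordinate swap, i.e., $\Am{i} W$ must be symmetric, which reduces to $W \Am{i} = \Am{i} W$. Writing $C_i$ for the conjugacy class that indexes $\Am{i}$, a short computation gives $(\Am{i} W)_{x, y} = 1 \iff x\Iv y \in C_i \wo$ and $(W \Am{i})_{x, y} = 1 \iff x\Iv y \in \wo C_i$; these are equal because conjugation preserves conjugacy classes, so $\wo C_i \wo\Iv = C_i$. The symmetry then lets the $\Theta_n$-orbitals inside $S_i$ combine into symmetrized $\Bc{\ell}$, giving the coefficients $t_{\ell, \d+i}$. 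The main conceptual obstacle is the invariance argument in the second identity, where $\Theta_n$'s centralizer property is exactly what is needed; once this is pinned down, the 0-1 decomposition follows mechanically.
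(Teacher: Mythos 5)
Your proof is correct and follows essentially the same route as the paper: both arguments hinge on the centralizer property of $\Theta_n$ (i.e.\ $\beta\wo=\wo\beta$) to show that the support of $\Am{i}W$ is a union of $(\Sym_n\times\Theta_n)$-orbitals, and on the symmetry of $\Am{i}$ and $\Am{i}W$ (via $\wo\,\C_{\Sym_n,i}=\C_{\Sym_n,i}\,\wo$) to collect these orbitals into 0-1 combinations of the symmetrized $\Bc{\ell}$. The only difference is presentational: the paper first identifies each adjacency matrix with a conjugacy-class sum of right-translation matrices $\rho(\beta)$ and then decomposes the translated class $\C_{\Sym_n,i}\wo$ into $\Theta_n$-classes, whereas you argue directly with invariance of the support set under the induced action on $\Sym_n\times\Sym_n$ --- the same computation under the standard correspondence $(x,y)\mapsto x\Iv y$.
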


\renewcommand{\Is}{{\tilde{\mathcal{I}}_{\Theta_n,\ell}}}
\renewcommand{\Gamma}{\gamma}

\ifthenelse{\boolean{longver}}{
See appendix for the proof of Lemma \ref{lem:1}. 
The coefficients $t_{\ell,i}$ satisfying (\ref{eqn:T}) are used to state the following main theorem. 
}{
See~\cite{FH} for proof.
The coefficients $t_{\ell,i}$ satisfying (\ref{eqn:T}) are used to state the following main theorem.}
For $\Theta_n\subseteq \Sym_n$, let index subsets $\Is$ satisfy $\Bc{\ell} = \sum_{j \in \Is} \BmT{i}$. Using orbit-distances $\del_j$ w.r.t. $(\Sym_n\times\Cyc,\Sym_n)$ and the \Kendall metric $\dist(,)$, define constants $\Gamma_\ell$ that satisfy $\Gamma_\ell =  \max \{\del_j : j \in \Is\}$. 

\begin{thm}[LP Bound on $\lengthCC$ and $\D$] \label{thm:main}
Let $W$ be the 0-1 orthornormal matrix defined as before. 


For $1\leq i,j\leq \d$, let constants $\p_{i,j}, c_j$ and matrices $U,I_j$ be obtained from Theorem \ref{thm:conjCC}. 
Let matrices $M_{1,j}$ and $M_{2,j}$ satisfy $M_{1,j} = \frac{1}{2} (U I_j U^T) (I+W)$ and $M_{2,j} = \frac{1}{2}(U I_j U^T)  (I-W)$. 

For $1\leq \ell\leq \dthe$, let the constants $\Gamma_\ell$ be defined as above. For $1\leq i \leq 2\d$, let the coefficients $t_{\ell,i}$ satisfy (\ref{eqn:T}). 
Let $a^*$ in $\Real^{2\d}$ solve the following LP problem 
\begin{align} 
      & ~~~~~~\min_{(a_1, a_2, \dots, a_{2\d}) \in \Real^{2\d}} \sum_{i=1}^{2\d} t_{1,i} \cdot a_i \label{eqn:LP_Dual} \\[-4pt]
\mbox{s.t. }
    &  \sum_{i=1}^{2\d} t_{\ell,i} \cdot a_i \leq 0~~~\mbox{ for }~ 2 \le \ell \le \dthe \mbox{ with } \Gamma_\ell \geq\D, \nonumber \\[-4pt]
         \sum_{i=1}^\d & (a_i +a_{\d+i} ) \cdot \p_{i,j}    \geq c_j \mbox{ for } 1 \leq j \leq \d  \mbox{ with }  M_{1,j} \neq 0 \nonumber \\[-4pt]
         \sum_{i=1}^\d & (a_i -a_{\d+i} ) \cdot \p_{i,j}   \geq c_j \mbox{ for } 1 \leq j \leq \d \mbox{ with } M_{2,j} \neq 0 \nonumber
\end{align}
Let $b_1^*$ and $\Asize(n,\D)$ respectively denote the optimal objective values of the dual problem (\ref{eqn:SDP_Dual}), and the optimal code size problem (\ref{eqn:opt1}), for $\G = \Sym_n\times\Cyc$ and the \Kendall metric $\dist(,)$ and $\D$. 
Then we have the following inequalities 
\[
\Asize(n,\D) \leq b_1^* \leq \sum_{i=1}^{2\d} t_{1,i} \cdot a_i^* .
\]
\end{thm}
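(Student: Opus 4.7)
The plan is to prove the rightmost inequality by showing that every feasible $a$ of the LP (\ref{eqn:LP_Dual}) can be lifted to a feasible $b$ of the dual SDP (\ref{eqn:SDP_Dual}) with matching objective $b_1 = \sum_{i=1}^{2d} t_{1,i} a_i$; the leftmost inequality $\Asize(n,\D) \leq b_1^*$ then follows from Proposition~\ref{pro:SDPrelax} and the duality chain (\ref{eqn:chain}). Given LP-feasible $a$, form
\[
M(a) = \sum_{i=1}^{d} a_i\, A_{\Sym_n,i} + \sum_{i=1}^{d} a_{d+i}\, A_{\Sym_n,i} W \;\in\; \Cen.
\]
Applying Lemma~\ref{lem:1} termwise and then using the inclusion $\Sym_n \times \Cyc \subseteq \Sym_n \times \Theta_n$ to refine each $\tilde A_{\Theta_n,\ell}$ into a sum of length-CC adjacency matrices over $\tilde{\mathcal{I}}_{\Theta_n,\ell}$ yields $M(a) = \sum_{j=1}^{\dsym} b_j\, \tilde A_{\Cyc,j}$ with $b_j = \sum_{i=1}^{2d} t_{\ell(j),i}\, a_i$, where $\ell(j)$ is the unique index satisfying $j \in \tilde{\mathcal{I}}_{\Theta_n,\ell(j)}$.

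The two dual-SDP constraints are then verified as follows. For the sign constraint, whenever $j \geq 2$ has $\delta_j \geq \D$, the definition $\Gamma_{\ell(j)} = \max\{\delta_{j'} : j' \in \tilde{\mathcal{I}}_{\Theta_n,\ell(j)}\}$ gives $\Gamma_{\ell(j)} \geq \delta_j \geq \D$, so the LP sign inequality forces $b_j = \sum_i t_{\ell(j),i} a_i \leq 0$. For semidefiniteness of $M(a) - J$, the key fact (built into $\Cen$) is that $W$ commutes with every $A_{\Sym_n,i}$; combined with Theorem~\ref{thm:conjCC} iii), $W$ therefore commutes with each spectral projector $U I_j U^T$, and the symmetric involution $W$ decomposes as $W = \tfrac12(I+W) - \tfrac12(I-W)$. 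Hence $\{M_{1,j}, M_{2,j}\}_{j=1}^{d}$ form a system of pairwise-orthogonal projectors summing to $I$, and direct expansion yields
\[
M(a) = \sum_{j=1}^{d} \Bigl[\bigl(\textstyle\sum_{i=1}^d (a_i + a_{d+i})\pi_{i,j}\bigr) M_{1,j} + \bigl(\textstyle\sum_{i=1}^d (a_i - a_{d+i})\pi_{i,j}\bigr) M_{2,j}\Bigr].
\]
Since $W$ fixes the all-ones vector, $M_{2,1} = 0$ and $M_{1,1} = U I_1 U^T$; combined with $J = c_1\, U I_1 U^T$ from Theorem~\ref{thm:conjCC} iii), the subtraction $M(a) - J$ preserves the spectral form and becomes positive semidefinite precisely when the LP inequalities $\sum_i(a_i + a_{d+i})\pi_{i,j} \geq c_j$ (for $j$ with $M_{1,j}\neq 0$) and $\sum_i(a_i - a_{d+i})\pi_{i,j} \geq c_j$ (for $j$ with $M_{2,j}\neq 0$) all hold.

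Finally, because the diagonal orbital $\{(g,g):g\in\Sym_n\}$ is labelled $1$ in both the $\Cyc$- and $\Theta_n$-partitions, we have $\ell(1)=1$, so $b_1 = \sum_{i=1}^{2d} t_{1,i} a_i$ matches the LP objective and the lift is complete. The main technical obstacle will be the joint spectral decomposition step: one must cleanly justify that $W$ commutes with every minimal idempotent $U I_j U^T$ of the commutative subalgebra $\AlgO \subseteq \Cen$, carefully identify which of $M_{1,j}, M_{2,j}$ vanish (in particular $M_{2,1}=0$) so that the PSD condition reduces exactly to the LP's two families of inequalities at the correct indices, and verify the orbital-refinement dictionary relating $\Gamma_\ell$ to the individual $\delta_j$ under $\Cyc \subseteq \Theta_n$.
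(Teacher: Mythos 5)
Your proposal is correct and follows essentially the same route as the paper's own proof: you lift each LP-feasible $a$ to a dual-SDP-feasible $b$ through the set $\Cen$, using Lemma~\ref{lem:1} and the refinement of the $(\Sym_n\times\Theta_n,\Sym_n)$ classes into length-CC orbitals for the sign constraints via $\Gamma_\ell$, and the pairwise-orthogonal projectors $M_{1,j},M_{2,j}$ (with $W$ commuting with the idempotents $UI_jU^T$) for the positive-semidefiniteness constraint, then conclude by Proposition~\ref{pro:SDPrelax} and the weak-duality chain (\ref{eqn:chain}). The points you flag as needing care (commutation of $W$ with $UI_jU^T$, vanishing of $M_{2,1}$, and the orbital dictionary behind $\Gamma_\ell$) are exactly the ones the paper's appendix handles, and your treatment of them is sound.
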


As promised our main result Theorem \ref{thm:main} furnishes an LP bound on the optimal code size $\Asize(n,\D)$.
\ifthenelse{\boolean{longver}}{
See appendix for proof.
}{See~\cite{FH} for proof.
}
The number $\dthe$ of matrices $\Bc{\ell}$ is given in the previous Table \ref{tab:dsym}, where observe $\dthe > \d$, but $\dthe$ is much reduced from $\dsym$. Table \ref{tab:KT} shows our computed LP bounds whereby $n$ is between $3$ and $11$.
Our proposed LP bound fails to completely answer the question posed (at the end) of Section \ref{sect:SDP}, but some initial success is obtained.
Observe that our LP bound is at least as tight as the SB in the places highlighted in bold font. 
Improvements are mainly obtained when $\dmin$ is close to $n(n-1)/2$. 
Interestingly, these two bounds are useful for similar ranges of $\dmin$ (the SB is known to be non-trivial only when $\dmin\geq n$, see~\cite{Barg2009}).
For the case $n=3$ the LP and SDP bounds are equal,
though unfortunately for $n >4$, our LP bound does worse than the HB, and the performance gap gets bigger for smaller $\dmin$. 
Inspired by~\cite{Barg2009} (which points out three regions with different asymptotics), it is tempting to conjecture that different strategies work for cases $\dmin < n$ and $\dmin \geq n$. 
The subset searched here works reasonably well for the latter case, 
for the former what are the ``good'' dual-feasible subsets?

\renewcommand{\dmin}{\del_{\scriptsize\mbox{min}}}

\vspace*{-2pt}
One issue: no known efficient method to compute ``max. distances'' $\Gamma_\ell$, where $\Gamma_\ell =  \max \{\del_j : j \in \Is\}$. If one replaces $\Theta_n$ by $\Sym_n$ in the expression for $\Gamma_\ell$, (where $\Am{i} = \sum_{j\in \tilde{\Iss}_{\Sym_n,i}} \Bm{j}$),
then~\cite{Geck2000} has closed-forms for $\Gamma_\ell$. Also its is unclear how large the number $\size{\{1\leq \del_\ell \leq \dthe : \del_\ell\geq \dmin \}}$ of non-positive constraints could be. 
No rigorous analysis is done here, but see~\cite{Brink1999} for a characterization of $\Theta_n$. 


%

\section{Conclusion \& Future Directions}
Motivated by recent work on solving SDP's with algebraic structure, we formulated the optimal code size problem w.r.t. \Kendall metric as a SDP, and propose using LP to search for solutions. The problem seems difficult, but we report modest improvement over a recent Singleton bound. 

The interest is to progress toward (possibly) beating known Hamming bounds, for the cases $n \geq 6$ (other than those shown here).
We offer some future directions. As previously mentioned, it would be nice to analyze the subsets that should be searched (for $\dmin < n$). 
Next, one might generalize to larger subsets where a manageable SDP (not a LP as here) is used for searching. 
Finally, one might seek a similar Fourier-type analysis as~\cite{Navon2007}, using \emph{representation-theoretic} techniques.







%


\ifthenelse{\boolean{longver}}{

}{


}

\ifthenelse{\boolean{longver}}{

\newpage
\appendix


\subsubsection{SDP relaxation of optimal code size problem}
First we prove (\ref{eqn:SDP}) is a relaxation of (\ref{eqn:opt1}). In the following for a subset $\S $ of $ \Sym_n$, let $\S^2$ denote the product set $\S\times\S$. Let $\Real^{\Sym_n }$ denote the set of vectors with real number entries with index set $\Sym_n$.

\begin{proof}[Proof of Proposition \ref{pro:SDPrelax}]
Let $\mathcal{V}$ denote a solution of (\ref{eqn:opt1}),
 \textit{i.e.}, let $\size{\mathcal{V}}=\Asize(n,\D)$. 
  Identify the subset $\mathcal{V} $ of $ \Sym_n$
 with an 0-1 vector $\x $ in $ \Real^{\Sym_n}$, where $\x_g = 1$ if and only if $g \in \S$.
 We construct a matrix $M$ whose objective value in (\ref{eqn:SDP}) equals $\size{\S}$, \textit{i.e.}, $\Tr(J M) = \size{\S}$.
 Let $M = \frac{1 }{ \x^\tpr \x } \x\x^\tpr$,
\textit{i.e.}, $M = \frac{1}{\# \S} \x \x^\tpr$, and let $\1$ denote the all-ones vector.
 Observe that
 $\Tr(J M) = \Tr( \1 \1^\tpr  M )
 = \Tr( \1^\tpr M \1 )
 = \frac{1}{ \size{ \S } } \Tr( (\1^\tpr \x) (\x^\tpr \1) )
 = \frac{1}{ \size{ \S } } \Tr( \size{\S} \size{\S} )
 = \size{\S}$. 
 Next we show that matrix $M$ is a feasible solution to (\ref{eqn:SDP}).

Because $M = \frac{1}{\size{\S}} bb^T$, therefore $M$ is positive semidefinite and $\Tr(M) = 1$ is satisfied.
 Next observe 
 $\Tr(\Asym_j M) =  \size{ ( \Orbsym_j \cap \S^2 )} / \size{\S}$, so $\Tr(\Asym_j M)\geq 0$ is satisfied.
 Now consider any $\g,\h \in \Sym_n$ where $\g\neq\h$.
 If $\g,\h \in \S$ then $\dist(\g,\h) \geq \D$. 
 By contraposition, if $\dist(\g,\h) < \D$ then $(x,y)\notin \S^2$.
 Let $(x,y)\in \Orbsym_j$ for some $j \geq 2$, then $\del_j < \D$ also implies $(x,y)\notin \S^2$, which in turn implies $\Tr(\Asym_j M) = 0$.
\end{proof}
\vspace*{10pt}
\subsubsection{Matrix $W$ and set $\AlgT$}
Next we prove the orthonormal, 0-1 matrix $W$ commutes with all matrices in the set $\AlgT$. Recall $W$ is related to the longest element $\wo$, where for any $x,y\in \Sym_n$, we have $(W)_{x,y}=1$ if and only if $y \wo\Iv = x$. 

\renewcommand{\Bm}[1]{{A}_{\Cyc,#1}}
\begin{proof}
It suffices to show that $W$ commutes with any adjacency matrix $\Bm{j}$ of the length CC.
For any $\Bm{j}$, observe that $(W^T \Bm{j} W)_{x,y} = (\Bm{j})_{x\wo\Iv,y\wo\Iv}$.
Recall $(\Bm{j})_{x,y}=1$ if and only if $(x,y)\in \Orb_j$, whereby $\Orb_j$ is an orbital of the induced action of $\Sym_n\times \Cyc$ on $\Sym_n\times\Sym_n$. Hence $(\Bm{j})_{x\wo\Iv,y\wo\Iv} = (\Bm{j})_{x,y}$ because $(x,y)$ and $(x\wo\Iv,y\wo\Iv)$ both belong to same orbital. Hence for any $j$ we have $\Bm{j} W = W \Bm{j}$, which implies $W$ commutes with all of $\AlgT$. 
\end{proof}

\newcommand{\Z}{\mathcal{Z}}
\newcommand{\At}[1]{{A}_{\Z,#1}}
\newcommand{\ProdAct}[1]{\alpha#1 \beta\Iv}
\newcommand{\alp}{\alpha}
\renewcommand{\G}{\Z}
\renewcommand{\H}{\mathcal{H}}
\newcommand{\RightAct}[1]{\rho(#1)}

\vspace*{10pt}
\subsubsection{Technical Lemma \ref{lem:1}}
To show Lemma \ref{lem:1} we need to first establish a relationship between the adjacency matrices $\At{i}$ of the CC $(\Sym_n\times\Z,\Sym_n)$, where $\Z$ is a subgroup of $\Sym_n$, with orbits on $\Sym_n$, of a subgroup of $\Sym_n\times\Sym_n$ that is related to $\Z$. Recall our definition of the action of any $\tup{\alp,\beta}$ in $\Sym_n \times \Sym_n$ on any $x \in \Sym_n$, given as $\tup{\alp,\beta}x = \ProdAct{x}$. 
For any subgroup $\Z$ of $\Sym_n$, denote the subgroup $ \{\tup{\beta,\beta}: \beta \in \G \}$ of $\Sym_n \times \Sym_n$ by $\H_\G$. 
Let $\C_{\G,1},\C_{\G,1},\cdots,\C_{\G,r}$ denote the $r$ orbits, obtained from the action of $\H_\G$ on $\Sym_n$. 
Each orbit $\C_{\G,i}$ is called a \textbf{conjugacy class} (of the action of $\H_\G$ on $\Sym_n$). 
Let $\rho(\beta)$ denote the 0-1 matrix where $(\rho(\beta))_{x,y}=1$ if and only if $y \beta\Iv = x$ (\emph{i.e.}, by our previous definition, $W=\rho(\wo)$).
We claim a one-to-one correspondence between some conjugacy class $\C_{\G,i}$ and some adjacency matrix $\At{i}$ of the CC $(\Sym_n\times\Z,\Sym_n)$, given as
\bea
	\At{i} = \sum_{\beta \in \C_{\G,i}} \RightAct{\beta}.\label{eqn:adj1}
\eea
By this claim the number $r$ of conjugacy classes $\C_{\G,i}$, equals the number $d$ of adjacency matrices $\At{i}$. To show (\ref{eqn:adj1}), consider the following.

First, we establish the one-to-one correspondence.
By the definition of the orbital $\Orb_i$, for any $\tup {x,y}, \tup{\tilde{x},\tilde{y}} \in \Orb_i$ there exists some $\alpha\in\Sym_n$ and $\beta\in \G$ such that $\tilde{x} = \ProdAct{x}$ and $\tilde{y} = \ProdAct{y}$. Equivalently for any $\tup {x,y}, \tup{\tilde{x},\tilde{y}} \in \Orb_i$, there exists some $\beta \in \G$ that satisfies $\tilde{x}\Iv \tilde{y} = \beta x\Iv y \beta\Iv $, which means that $\tilde{x}\Iv \tilde{y}$ and $x\Iv y $ are both in $\C_{\G,i}$. 
Note that $\sum_{\beta \in \C_{\G,i}} \RightAct{\beta}$ is a 0-1 matrix, and by the definition of $\rho(\beta)$ we conclude
\bea
\left(\sum_{\beta \in \C_{\G,i}} \RightAct{\beta}\right)_{x,y} = 1,~~~ \mbox{ if and only if }~~~
x\Iv y \in \C_{\G,i},  \nonumber 
\eea
if and only if $\tup{x,y} \in \Orb_i$. 
This establishes (\ref{eqn:adj1}) by 
by referring to the original definition of $\At{i}$ from $\Orb_i$. 

\newcommand{\woC}{\mathcal{P}_{\Sym_n,i}}
\renewcommand{\Iss}{{\mathcal{I}}_{\Theta_n,\ell}}
\newcommand{\dlen}{{d_{\Cyc}}}
\renewcommand{\dthe}{{d_{\Theta_n}}}
\newcommand{\Js}{{\mathbb{J}}_i}
\begin{proof}[Proof of Lemma \ref{lem:1}] \newcommand{\W}{W}
Denote a set $ \{ \beta \wo : \beta \in \C_{\Sym_n,i}\}$ of elements in $\Sym_n$ by $\woC$.
Hence $\woC$ is obtained using the conjugacy class $\C_{\Sym_n,i}$ and the longest element $\wo$.
Let $\Am{i}$ be an adjacency matrix of the conjugacy CC, and $W = \rho(\wo)$. 
It follows $\Am{i}\W = \sum_{\beta \in \C_{\Sym_n,i}} \rho(\beta \wo) = \sum_{\beta \in \woC} \rho(\beta)$. 
Because $\Cyc $ is a subgroup of $ \Theta_n$, so for each conjugacy class $\C_{\Theta_n,\ell}$ there exists index sets $\Iss$ satisfying $\C_{\Theta_n,\ell} = \cup_{j\in\Iss} \C_{\Cyc,j}$. 
The sets $\Iss$ partition $\{1,2,\cdots, \dlen\}$. 
For $1\leq i \leq \d$, we claim there exists \emph{new} index sets $\mathbb{I}_i$ and $\Js$ that satisfy 
\bea
	\C_{\Sym_n,i} &=& \cup_{\ell \in \mathbb{I}_i} \C_{\Theta_n,\ell}, \label{eqn:woC1} \\
	\woC &=& \cup_{\ell \in \Js } \C_{\Theta_n,\ell}. \label{eqn:woC}
\eea
If the claim holds, Lemma \ref{lem:1} is easily proved as follows. 

\newcommand{\Jss}{{\tilde{\mathbb{J}}}_{i}}
\renewcommand{\Bc}[1]{{A}_{\Theta_n,#1}}
\newcommand{\BcT}[1]{\tilde{A}_{\Theta_n,#1}}
\renewcommand{\dthe}{{\tilde{d}_{\Theta_n}}}
By the previously established (\ref{eqn:adj1}), we can write $\Bc{\ell} = \sum_{\beta\in\C_{\Theta_n,\ell}} \rho(\beta)$, where $\Bc{\ell}$ is an adjacency matrix of the CC $(\Sym_n\times\Theta_n,\Sym_n)$. By (\ref{eqn:adj1}) again, an adjacency matrix $\Am{i}$ of the conjugacy CC satisfies $\Am{i} = \sum_{\beta\in\C_{\Sym_n,i}} \rho(\beta)$, so (\ref{eqn:woC1}) implies $\Am{i} = \sum_{\ell\in \mathbb{I}_i} \Bc{\ell}$. 
Also because $\Am{i}\W = \sum_{\beta \in \C_{\Sym_n,i}} \rho(\beta \wo)$, by definition of $\woC$ then (\ref{eqn:woC}) implies $\Am{i} \W= \sum_{\ell\in \Js} \Bc{\ell}$. But because both $\Am{i}$ and $\Am{i}\W$ are symmetric 0-1 matrices, there must exist sets ${\tilde{\mathbb{I}}}_{i}$ and $\Jss$ to express $\Am{i}$ and $\Am{i}\W$ in terms of symmetrized adjacency matrices $\BcT{\ell}$, i.e.
\bea
  \Am{i} &=& \sum_{\ell \in {\tilde{\mathbb{I}}}_{i}} \BcT{\ell} = \sum_{\ell=1}^\dthe t_{\ell,i}   \cdot \BcT{\ell}, \nn
  \Am{i}\W &=& \sum_{\ell \in \Jss} \BcT{\ell} = \sum_{\ell=1}^\dthe t_{\ell,\d+i}  \cdot \BcT{\ell},  \nonumber
\eea
where $t_{\ell,i}$ are coefficients appearing in the lemma statement.

We end by showing the previous claims.
The first identity (\ref{eqn:woC1}) follows easily from the fact $\Theta_n \subseteq \Sym_n$. 
The second identity (\ref{eqn:woC}) follows by arguing if $\C_{\Theta_n,\ell} \cap \woC \neq \emptyset$ then $\C_{\Theta_n,\ell} \subset \woC$. Consider some $ x \wo\in \C_{\Theta_n,\ell} \cap \woC$, where $x \in \C_{\Sym_n,i}$. By definition of the conjugacy class $\C_{\Theta_n,\ell} = \{(\alpha,\alpha)\cdot  x\wo : \alpha \in \Theta_n\}$. By definition of the group $\Theta_n$ we have $(\alpha,\alpha)\cdot  x \wo= \alpha  x \wo\alpha\Iv = (\alpha x \alpha\Iv) (\alpha \wo \alpha\Iv) = (\alpha x \alpha\Iv)\wo$, and it follows $(\alpha x \alpha\Iv)\wo$ is also in $ \woC$ as $\alpha x \alpha\Iv \in \C_{\Sym_n,i}$. 
Hence (\ref{eqn:woC}) is shown.
\end{proof}

\newcommand{\z}{z}
\newcommand{\W}{W}
\renewcommand{\dthe}{{\tilde{d}_{\Theta_n}}}
\renewcommand{\dsym}{{\tilde{d}_{\Cyc}}}
\renewcommand{\Bm}[1]{\tilde{A}_{\Cyc,#1}}
\vspace*{10pt}
\subsubsection{Main theorem}
Finally, the following verifies that our proposed LP bound (\ref{eqn:LP_Dual}) indeed provides an upper bound to the optimal value of the dual problem (\ref{eqn:SDP_Dual}).
\begin{proof}[Proof of Theorem \ref{thm:main}]
For some $b\in \Real^\dsym$, $\z\in \Real^\dthe$ and $a\in \Real^{2\d}$, we have the following chain of equalities 
\bea
\!\!\!\!\!\!\!\!\sum_{j=1}^\dsym  b_j \cdot \Bm{j} \!\!&\stackrel{(a)}{=}& \!\!\sum_{\ell=1}^\dthe \z_\ell \cdot \Bc{\ell} \nn
&\stackrel{(b)}{=} & \!\!\sum_{i=1}^d a_i \cdot \Am{i} + \sum_{i=1}^\d a_{\d+i} \cdot \Am{i} \W, 
\label{eqn:BAW2}
\eea
where $(a)$ follows by setting $b_j=z_\ell$ if $j\in \Is$, and $(b)$ follows by setting $\z_\ell = \sum_{i=1}^{2\d}t_{\ell,i} \cdot a_i$ for coefficients $t_{\ell,i}$ that satisfy (\ref{eqn:T}).
The theorem will be proved by showing for any feasible $a$ in $\Real^{2\d}$ to (\ref{eqn:LP_Dual}), there corresponds some  some feasible $b$ in $\Real^{\dsym}$ to (\ref{eqn:SDP_Dual}) by the above relationship (\ref{eqn:BAW2}).


Firstly the objectives of (\ref{eqn:SDP_Dual}) and (\ref{eqn:LP_Dual}) are equal because 
$b_1 = z_1  = \sum_{i=1}^{2\d}t_{1,i} \cdot a_i$.
Let $a$ satisfy the second constraint of (\ref{eqn:LP_Dual}) and let $b$ satisfy (\ref{eqn:BAW2}). By $\z_\ell = \sum_{i=1}^{2\d}t_{\ell,i} \cdot a_i$, if $\Gamma_\ell \geq \D $ we have $\z_\ell \leq 0$. Because $\Gamma_\ell= \max \{\del_j : j \in \Is\}$, then for any $j \in \Is$ such that $\del_j  \geq \D$, we must have $b_j \leq 0$. Finally $\cup_{\ell=2}^{\dthe} \{\del_j : j \in \Is\} = \{\del_2,\del_3,\cdots,\del_\dsym \}$, implying that $b_j \leq 0$ for all $j\geq 2$ whereby $\del_j \geq \D$, therefore $b$ satisfies the non-positive constraint of (\ref{eqn:SDP_Dual}).

Next consider the matrices $M_{1,j}$ and $M_{2,j}$ given in the theorem statement. Note $M_{1,j}+M_{2,j} = U I_j U^T$ and $M_{1,j}-M_{2,j} = (U I_j U^T) W$.
Using $\Am{i} = \sum_{j=1}^d p_{i,j}\cdot (U I_j U^T)$ in Theorem \ref{thm:conjCC} we express
\bea
\Am{i} &=& \sum_{j=1}^\d p_{i,j}\cdot M_{1,j} + \sum_{j=1}^\d p_{i,j}\cdot M_{2,j}, \nn
\Am{i}\W &=& \sum_{j=1}^\d p_{i,j}\cdot M_{1,j} - \sum_{j=1}^\d p_{i,j}\cdot M_{2,j}, \nn
J &=& \sum_{j=1}^\d c_j \cdot M_{1,j} + \sum_{j=1}^\d c_j \cdot M_{2,j}, \label{eqn:thmmain}
\eea
where we claim (shown below) that the matrices $M_{1,j}$ and $M_{2,j}$ are i) all symmetric, and ii) have eigenvalues only $0$ or $1$, and iii) $M_{1,j}M_{2,j}=0$ and iv) $\sum_{j=1}^d (M_{1,j} +M_{2,j}) =\sum_{j=1}^d (U I_j U^T) = I$. For any $a$ satisfying the last two constraints of (\ref{eqn:LP_Dual}), then (\ref{eqn:thmmain}) implies $\sum_{i=1}^d a_i \cdot \Am{i} + \sum_{i=1}^\d a_{\d+i} \cdot \Am{i} \W - J$ is positive semidefinite. Then for $b$ that corresponds by (\ref{eqn:BAW2}) to such an $a$, we will have $\sum_{j=1}^\dsym b_j \cdot \Bm{j}-J$ satisfying the positive semidefinite constraint in (\ref{eqn:SDP_Dual}).


To finish the proof we address the above claims i) and ii), whereby iii) and iv) will then follow from similar arguments. 
Claim i) follows because all matrices $U I_j U^T$ commute with all matrices $\Am{i}$, see Theorem \ref{thm:conjCC}.
Recall $W$ commutes with all matrices in $\Cen$, therefore $W$ commutes with all matrices $\Am{i}$, which implies $W$ commute with all $U I_j U^T$. This implies $M_{1,j}$ and $M_{2,j}$ are symmetric, since both $W$ and $U I_j U^T$ are symmetric.

Claim ii) follows because $\wo\Iv = \wo$, and it can be verified that $W^T = W\Iv =W$, which implies that the possible eigenvalues of $W$ are $-1$ and $1$. Thus the possible eigenvalues of matrices $M_{1,j}$ and $M_{2,j}$ are $0$ or $1$.
\end{proof}
}{
}

\end{document}
